\title{Kastor-Traschen Black Holes, Null Geodesics and Conformal Circles}
\author{Stephen Casey\footnote{\textbf{Email:} sc581@cam.ac.uk} \\Department of Applied Mathematics and Theoretical Physics, \\University of Cambridge, \\Wilberforce Road, Cambridge CB3 0WA, UK.}
\newtheorem{twocent}{Proposition}[section]
\newtheorem{twocenttwo}[twocent]{Proposition}
\newtheorem{charac}[twocent]{Proposition}
\newtheorem{finalprop}[twocent]{Proposition}
\begin{document}
\maketitle
\begin{abstract}
The Kastor-Traschen metric is a time-dependent solution of the Einstein-Maxwell equations with positive cosmological constant $\Lambda$ which can be used to describe an arbitrary number of charged dynamical black holes. In this paper, we consider the null geodesic structure of this solution, in particular, focusing on the projection to the space of orbits of the timelike conformal retraction. It is found that these projected light rays arise as integral curves of a system of third order ordinary differential equations. This system is not uniquely defined, however, and we use the inherent freedom to construct a new system whose integral curves coincide with the projection of distinguished null curves of Kastor-Traschen arising from a magnetic flow. We discuss our results in the one-centre case and demonstrate a link to conformal circles in the limit $\Lambda \rightarrow 0$. We also show how to construct analytic expressions for the projected null geodesics of this metric by exploiting a well-known diffeomorphism between the K-T metric and extremal Reissner-Nordstrom deSitter. We make some remarks about the two-centre solution and demonstrate a link with the one-centre case.

\end{abstract}

\section{Introduction}
Recently, the study of the behaviour of null geodesics has led to the development of geometric tools useful for revealing physical properties of spacetimes which admit a particular type of timelike vector field.
\\In the case of a \emph{static metric}, for example, a convenient geometric tool is the \emph{optical metric} defined on the space of orbits of a given hypersurface-orthogonal (HSO) Killing vector. Light rays of the static metric are found to project down to unparametrised geodesics of the associated optical metric. This property has been used to understand some observations of the dynamics of light rays in Schwarzschild-deSitter metrics \cite{wernerwar}, and to give an alternative interpretation of black-hole no-hair theorems \cite{blackhole}. Static metrics admitting more than one HSO Killing vector have been considered in the context of projective equivalence of the associated optical metrics \cite{self}.
\\Some progress has also been made in the case of \emph{stationary metrics} \cite{statmet}, where there are two distinguished geometric structures on the space of orbits $\mathcal{B}$ of the timelike Killing vector field. On one hand, the light rays of the stationary metric project down to solutions of the Zermelo navigation problem on $\mathcal{B}$ with specially-defined metric $h$ and wind $\textbf{W}$. Alternatively, one can construct a Randers-Finsler structure on $\mathcal{B}$ with Randers data $\{ a_{ij},b_{i} \}$ such that the null geodesic flow projects down to the magnetic flow due to $db$ on the curved manifold $\{\mathcal{B}, a_{ij} \}$.
\\In this paper, we generalise this work to another class of metrics - those which admit a timelike \emph{conformal retraction}, i.e, there exists a hypersurface-orthogonal timelike vector field, $\Theta$, for which the conformal structure on its space of orbits is preserved along the integral curves of $\Theta$. In the Riemannian case, such metrics have arisen as supersymmetric solutions of minimal $\mathcal{N}=2$ gauged supergravity with anti-self-dual Maxwell field \cite{DGST}. Moreover, when the anti-self-duality condition is relaxed in the case of positive cosmological constant, one obtains a solution, also admitting a conformal retraction, which is the Riemannian analogue of the well-known \emph{Kastor-Traschen} metric \cite{kastor}.
\\In Lorentzian signature, this metric is a time-dependent solution to the Einstein-Maxwell equations which can be seen to describe an arbitrary number of dynamical charged black holes in a deSitter background. In this paper, we focus on the null geodesic structure of these metrics and in particular, it is found that light rays project down to integral curves of a system of third order ordinary differential equations (ODEs) on the space of orbits of $\Theta$. In particular, we find that the projected null geodesics form only a subset of the set of integral curves and that this yields a freedom in how this third order system is defined.
\\In section 4, we pay particular attention to the one-centre solution (single black hole). We show that, in the limit as the cosmological constant tends to zero, our system of ODEs becomes that describing conformal circles of the flat metric (as described in \cite{yanocc}). We use this result to motivate section 5 where we discuss the numerology of the problem and construct a new third order system in three dimensions. The advantage of this new system is that it will allow us to give a physical interpretation to those integral curves which do not arise as projected null geodesics, coinciding with motion in the background magnetic field. This formulation also allows us to characterize the projected light rays. Proposition 5.1 is the central result of this paper.
\\In section 6, we show how to derive analytic expressions for projected null geodesics of the one-centre case by making use of a diffeomorphism between the extreme Reissner-Nordstrom deSitter metric and the Kastor-Traschen metric. We give plots of some of these curves and discuss the horizon structure in both sets of coordinates.
\\In section 7, we discuss the Kastor-Traschen solution with two centres. We demonstrate that, for special initial conditions, the projection along the conformal retraction of a null geodesic will lie in a plane. For one such curve, we illustrate a connection between the null geodesics of the two-centre Kastor-Traschen metric and a third order system that arises in the analysis of the one-centre case. We also look at the perturbations away from this plane and give a strict condition for stability.
\\Throughout this paper, we refer to the projection of geodesics onto the space of orbits of the conformal retraction $\Theta$ as the \emph{retraction projection} to avoid ambiguity in later sections.

\renewcommand{\abstractname}{Acknowledgements}
\begin{abstract}
I would like to thank my supervisor Maciej Dunajski for suggesting the idea for this paper and for his invaluable input into the development of the work. I would also like to thank Paul Tod and Christian Luebbe for helpful discussions.
\end{abstract}

\section{Conformal Retraction in the Kastor-Traschen metric}
The Kastor-Traschen metrics are a class of time dependent solutions to the Einstein-Maxwell equations with positive cosmological constant $\Lambda$ \cite{kastor}. In local coordinates, these metrics may be written as
\begin{equation}
g = -\frac{dT^2}{(V+cT)^2} + (V+cT)^2 h
\label{firsteq}
\end{equation}
with Maxwell 1-form
\begin{equation*}
A = \frac{dT}{V+cT},
\end{equation*}
where $V = V(x)$ is a harmonic function on the spatial coordinates, $c = \pm \sqrt{\frac{\Lambda}{3}}$ is a constant and
\begin{equation*}
h = h_{ij} dx^i dx^j \,\,\,\,\,,\,\,\,\,\, i,j = 1,2,3
\end{equation*}
is the flat 3-dimensional Riemannian metric (which we express in Cartesian coordinates for now).
\\Notice here that since $V$ is a harmonic function there is a freedom in defining the electromagnetic field tensor $F = dA$ so that the Einstein-Maxwell equations are still satisfied i.e, we can write
\begin{equation}
F = \frac{1}{\sqrt{1+\nu^2}} \left( dA -  \nu \epsilon_{ijk} h^{il} \frac{\partial V}{\partial x^l} dx^j \wedge dx^k \right) \,\,\,\,\,,\,\,\,\,\, i,j,k = 1,2,3
\label{newf}
\end{equation}
where $\nu$ is a constant and $\epsilon_{ijk}$ is totally anti-symmetric in its indices with $\epsilon_{123} = 1$. This allows us to introduce a magnetic field $\textbf{B} \propto \nabla V$ into our definition of the Kastor-Traschen solution, a notion which will be useful later.
\\In the limit as $c \rightarrow 0$, these metrics reduce to the well-known Majumdar-Papapetrou solutions \cite{majumdar}, \cite{papapetrou}. For the M-P metrics, it can be shown that solutions with
\begin{equation*}
V = \displaystyle \sum_{\alpha=1}^N \frac{m_\alpha}{|\textbf{x} - \textbf{w}_\alpha|},
\end{equation*}
where $|\textbf{x} - \textbf{w}_\alpha| = (h_{jk} (x^j-w_\alpha^j)(x^k - w_\alpha^k))^{1/2}$ and $\textbf{w}_{\alpha}$ is a fixed vector for each $\alpha$, can be analytically extended to be interpreted as a system of charge equal mass black holes \cite{hartle}. In this system, the gravitational forces on each black hole are balanced by the electrostatic forces. However, when $c \neq 0$, the black holes are dynamic and can be observed to coalesce \cite{kastor}.
\\As noted in the introduction, an interesting property of the K-T metrics is that they admit a timelike \emph{conformal retraction} $\Theta$. In other words,
\begin{equation*}
\mathcal{L}_{\Theta} H_{\mu \nu} = f H_{\mu \nu} + \Theta_{(\mu}C_{\nu)}
\end{equation*}
where
\begin{equation*}
H_{\mu \nu} = g_{\mu \nu} - \frac{\Theta_\mu \Theta_\nu}{¦\Theta¦^2} \,\,\,\,\,,\,\,\,\,\, \Theta^2 := g_{\mu \nu} \Theta^{\mu} \Theta^{\nu}
\end{equation*}
and $f$ and $C$ are an arbitrary function and one-form, respectively. Greek indices here run over the values 0,1,2,3 and we raise and lower indices using the metric $g_{\mu \nu}$. In our coordinates (\ref{firsteq}), the conformal retraction is $\Theta = \frac{\partial}{\partial T}$. Furthermore, the tensor $H$ is given by
\begin{equation*}
H_{0\mu} = 0 \,\,\,\,\,,\,\,\,\,\, H_{\mu \nu} = g_{\mu \nu} \,\,\,\,\,  \text{otherwise}.
\end{equation*}
The Lie derivative of this tensor is easy to compute
\begin{equation*}
(\mathcal{L}_{\Theta} H)_{\mu \nu} = \Theta^{\lambda} H_{\mu \nu, \lambda} + H_{\lambda \nu} \Theta^{\lambda}_{,\mu} + H_{\mu \lambda} \Theta^{\lambda}_{,\nu} = \frac{\partial}{\partial T} H_{\mu \nu} = 2c (V+cT) H_{\mu \nu}.
\end{equation*}
Hence, in this case, our function $f$ and one-form $C$ are given by
\begin{equation*}
f = 2c(V+cT) \,\,\,\,\,,\,\,\,\,\, C = 0.
\end{equation*}
Now, let us decree that under a change of metric $\tilde{g} = \Omega^2 g$, the choice of conformal retraction remains unchanged i.e, $\tilde{\Theta} = \Theta = \frac{\partial}{\partial T}$. Then, $\tilde{H}_{\mu \nu} = \Omega^2 H_{\mu \nu}$ and
\begin{equation*}
(\mathcal{L}_{\tilde{\Theta}} \tilde{H})_{\mu \nu} = (\mathcal{L}_{\Theta} \Omega^2 H)_{\mu \nu} = (\mathcal{L}_{\Theta} \Omega^2) H_{\mu \nu} + \Omega^2 (\mathcal{L}_{\Theta} H)_{\mu \nu} = (f + 2\mathcal{L}_{\Theta} \log \Omega)\tilde{H}_{\mu \nu} + \Omega^2 \Theta_{(\mu} C_{\nu)}.
\end{equation*}
Hence,
\begin{equation*}
\tilde{f} = f + 2\mathcal{L}_{\Theta} \log \Omega \,\,\,\,\,,\,\,\,\,\, \tilde{C} = \Omega^2 C.
\end{equation*}
and, for our example, $C = 0$ for any choice of metric in the conformal class of $g$.

\section{Projection of Null geodesics with arc-length parametrisation}
Since we are considering only the null geodesic structure of the Kastor-Traschen metrics, we may as well begin with a conformally rescaled version of (\ref{firsteq}) - this will ease the computation a little. So, let us take a new definition of $g$
\begin{equation*}
g \rightarrow \frac{1}{(V+cT)^2} g =  -\frac{dT^2}{(V+cT)^4} + h_{jk} dx^j dx^k.
\end{equation*}
For this metric, we can calculate the Christoffel symbols
\begin{equation*}
\Gamma^i_{00} = -\frac{2}{(V+cT)^5}h^{ij} \frac{\partial V}{\partial x^j} \,\,\,,\,\,\, \Gamma^i_{0j} = 0 = \Gamma^i_{jk}.
\end{equation*}
Hence, the geodesic equations for the spatial components may be written as
\begin{equation*}
\ddot{x}^i - \frac{2}{(V+cT)^5} h^{il}\frac{\partial V}{\partial x^l} \dot{T}^2 = F(s) \dot{x}^i \,\,\,\,\,;\,\,\,\,\, \dot{} = \frac{d}{ds}
\end{equation*}
where $F(s)$ is some function of the curve parameter $s$. If we assume that our geodesics are null, then we can rewrite this equation as
\begin{equation}
\ddot{x}^i - \frac{2}{V+cT} h^{il}\frac{\partial V}{\partial x^l} h_{jk} \dot{x}^j \dot{x}^k = F(s) \dot{x}^i.
\label{dudnullgeod}
\end{equation}
Now let us impose the condition that $s$ be the arc-length parameter for the metric $h$, i.e, $h_{jk} \dot{x}^j \dot{x}^k = 1$.
\begin{equation*}
\Rightarrow 0 = h_{jk} \dot{x}^j \ddot{x}^k = \frac{2}{V + cT} \frac{\partial V}{\partial x^k} \dot{x}^k + F(s)
\end{equation*}
\begin{equation*}
\Rightarrow F(s) = -\frac{2}{V + cT} \frac{\partial V}{\partial x^k} \dot{x}^k.
\end{equation*}
Hence, given that we use the arc-length parametrisation, the equation for null geodesics (\ref{dudnullgeod}) may be written as
\begin{equation}
\ddot{x}^i = \frac{2}{V+cT} \left( h^{il} \frac{\partial V}{\partial x^l} - \frac{\partial V}{\partial x^k} \dot{x}^k \dot{x}^i \right).
\label{propnullgeod}
\end{equation}
In the case when $h$ in (\ref{firsteq}) is non-flat, we can derive an analogous expression
\begin{equation}
\ddot{x}^i + \hat{\Gamma}^i_{jk} \dot{x}^j \dot{x}^k = \frac{2}{V+cT} \left( h^{il} \frac{\partial V}{\partial x^l} - \frac{\partial V}{\partial x^k} \dot{x}^k \dot{x}^i \right)
\label{twochelp}
\end{equation}
where $\hat{\Gamma}^i_{jk}$ are the connection components of the metric $h$. Two properties that we get from equation (\ref{propnullgeod}) are
\begin{eqnarray}
 h_{jk} \ddot{x}^j \ddot{x}^k &=& \frac{2}{V+cT} \frac{\partial V}{\partial x^k} \ddot{x}^k \label{iceprop} \\
\ddot{x}^k \frac{\partial V}{\partial x^k} &=& \frac{2}{V+cT} \left( \left( \frac{\partial V}{\partial x^k} \right) \left( \frac{\partial V}{\partial x_k} \right) - \left( \frac{\partial V}{\partial x^k} \dot{x}^k \right)^2 \right)
\label{niceprop}
\end{eqnarray}
Now, if we differentiate (\ref{propnullgeod}), we get the following
\begin{eqnarray*}
\dddot{x}^i &=& - \frac{2}{(V+cT)^2} \left( \frac{\partial V}{\partial x^k} \dot{x}^k + c \dot{T} \right) \left( h^{il}\frac{\partial V}{\partial x^l} - \frac{\partial V}{\partial x^k} \dot{x}^k \dot{x}^i \right) \\
&+& \frac{2}{V+cT} \left( h^{il} \frac{\partial^2 V}{\partial x^l \partial x^m} \dot{x}^m - \frac{\partial^2 V}{\partial x^k \partial x^l} \dot{x}^k \dot{x}^l \dot{x}^i - \frac{\partial V}{\partial x^k} \ddot{x}^k \dot{x}^i - \frac{\partial V}{\partial x^k} \dot{x}^k \ddot{x}^i\right).
\end{eqnarray*}
Then, using the null condition to eliminate $\dot{T}$ and equations (\ref{propnullgeod}), (\ref{iceprop}) and (\ref{niceprop}) to eliminate the $T$ parameter, we can rewrite this in the form:
\begin{eqnarray}
\dddot{x}^i &=& -  h_{jk} \ddot{x}^j \ddot{x}^k \dot{x}^i - \frac{3\ddot{x}^k \frac{\partial V}{\partial x^k}}{2\left(\frac{\partial V}{\partial x^k} \frac{\partial V}{\partial x_k} - \left( \frac{\partial V}{\partial x^k} \dot{x}^k \right)^2 \right)}\frac{\partial V}{\partial x^l} \dot{x}^l \ddot{x}^i - 2c \left( h^{il} \frac{\partial V}{\partial x^l} - \frac{\partial V}{\partial x^k} \dot{x}^k \dot{x}^i \right)  \nonumber \\
&+& \frac{2}{V+cT} \left( h^{il} \frac{\partial^2 V}{\partial x^l \partial x^m} \dot{x}^m - \frac{\partial^2 V}{\partial x^k \partial x^l} \dot{x}^k \dot{x}^l \dot{x}^i \right).
\label{thirdord}
\end{eqnarray}
Here, we can also eliminate the factor of $\frac{2}{V+cT}$ by using (\ref{niceprop}) so that we really have a system of third order ODEs completely dependent on the spatial coordinates of $g$ only. However, it is useful to keep it in the form above for the computation in the next section.

\section{One-Centre Case}
As was described in \cite{kastor}, a spacetime containing $N$ charged black holes with masses $m_\alpha$ ($\alpha = 1,\ldots,N$) and charges $q_\alpha = m_\alpha$ in a deSitter background can be represented by equation (\ref{firsteq}) with
\begin{equation*}
V = \displaystyle \sum_{\alpha=1}^N \frac{m_\alpha}{|\textbf{x}-\textbf{w}_\alpha|}
\end{equation*}
where $\textbf{w}_\alpha$ is a fixed vector for each $\alpha$. It is easily verified that $V$ is a harmonic function.
\\In this section, we look at the case $N=1$ where the black hole is situated at the origin. So, put $V = \frac{m}{|\textbf{x}|}$. With this definition, we can obtain the following identities:
\begin{equation*}
\frac{\partial V}{\partial x^j} = -\frac{m}{|\textbf{x}|^3} h_{jl} x^l \,\,\,\,\,,\,\,\,\,\, \frac{\partial^2 V}{\partial x^j \partial x^k} = \frac{3m}{|\textbf{x}|^5} h_{jl}x^j h_{km} x^m - \frac{m}{|\textbf{x}|^3} h_{jk}.
\end{equation*}
In particular, equation (\ref{propnullgeod}) becomes
\begin{equation}
\ddot{x}^i = \frac{2}{V+cT} \left( -\frac{m}{|\textbf{x}|^3} x^i + \frac{m}{|\textbf{x}|^3} (\textbf{x}.\dot{\textbf{x}}) \dot{x}^i \right),
\label{replacer}
\end{equation}
where $\textbf{x}.\dot{\textbf{x}} \equiv h_{jk} x^j \dot{x}^k$ and all subsequent dot products are taken with respect to the metric $h$ unless otherwise stated. This allows us to reduce the last term of (\ref{thirdord}), that is
\begin{eqnarray*}
& & \frac{2}{V+cT} \left( h^{il} \frac{\partial^2 V}{\partial x^l \partial x^m} \dot{x}^m - \frac{\partial^2 V}{\partial x^k \partial x^l} \dot{x}^k \dot{x}^l \dot{x}^i \right) \nonumber \\
&=& \frac{2}{V+cT} \left( \frac{3m}{|\textbf{x}|^5} (\textbf{x}.\dot{\textbf{x}}) x^i - \frac{3m}{|\textbf{x}|^5} (\textbf{x}.\dot{\textbf{x}})^2 \dot{x}^i \right) = -\frac{3}{|\textbf{x}|^2} (\textbf{x}.\dot{\textbf{x}}) \ddot{x}^i.
\end{eqnarray*}
With this simplification in mind, we can rewrite our system of third order ODEs (\ref{thirdord}) as
\begin{equation}
\dddot{x}^i = -|\ddot{\textbf{x}}|^2 \dot{x}^i + \frac{2mc}{|\textbf{x}|^3} \left( x^i - (\textbf{x}.\dot{\textbf{x}}) \dot{x}^i \right) - 3(\textbf{x}.\dot{\textbf{x}}) \left( \frac{1}{|\textbf{x}|^2} + \frac{\textbf{x}.\ddot{\textbf{x}}}{2(|\textbf{x}|^2 - (\textbf{x}.\dot{\textbf{x}})^2)} \right) \ddot{x}^i.
\label{mainequation}
\end{equation}
\\Any null geodesic of the Kastor-Traschen metric $g$ will project down to an integral curve of this system of third-order ODEs.

\subsection{Conformal Circles}
As $c \rightarrow 0$, it's obvious that the second term on the right-hand side of (\ref{mainequation}) vanishes. However, using the second equation in (\ref{niceprop}) with $c=0$ and $V = \frac{m}{|\textbf{x}|}$, we also find that
\begin{equation*}
\textbf{x}.\ddot{\textbf{x}} = 2 \left( \frac{(\textbf{x}.\dot{\textbf{x}})^2}{|\textbf{x}|^2} - 1 \right)
\end{equation*}
and the third term vanishes. To see the vanishing of the third term explicitly occuring with the vanishing of $c$, it seems we need to reintroduce the time coordinate $T$, in some way. For example, using (\ref{niceprop}) we can write (\ref{mainequation}) as
\begin{equation*}
\dddot{x}^i = -|\ddot{\textbf{x}}|^2 \dot{x}^i + \frac{2mc}{|\textbf{x}|^3} \left( x^i - (\textbf{x}.\dot{\textbf{x}}) \dot{x}^i \right) + 3(\textbf{x}.\dot{\textbf{x}}) \left( \frac{cT|\textbf{x}|(\textbf{x}.\ddot{\textbf{x}})}{2m(|\textbf{x}|^2-(\textbf{x}.\dot{\textbf{x}})^2)} \right) \ddot{x}^i.
\end{equation*}
Hence, as $c \rightarrow 0$, null geodesics satisfy
\begin{equation}
\dddot{x}^i = -|\ddot{\textbf{x}}|^2 \dot{x}^i.
\label{confcirc}
\end{equation}
We shall see that this system (\ref{confcirc}) occupies a central role in the theory of conformally flat manifolds.
\\In general, given a conformal structure $[\tilde{h}]$ on an $n$-dimensional manifold, there is a distinguished family of curves, known as the \emph{conformal circles}. These curves arise as the integral curves of a system of third order ODEs, see \cite{bailey}. To write this system down, let us choose a metric $\tilde{h}$ in the conformal class with torsion-free connection $\tilde{\Gamma}^i_{jk} = \tilde{\Gamma}^i_{(jk)}$, Ricci tensor $R_{jk}$ and scalar curvature $R$. Then, the Schouten tensor $P_{jk}$ is defined as
\begin{equation*}
P_{jk} = - \frac{1}{n-2} \left( R_{jk} - \frac{R}{2(n-1)} \tilde{h}_{jk} \right).
\end{equation*}
Furthermore, define the vector components $U^i = \dot{x}^i$ and $A^i = \ddot{x}^i + \tilde{\Gamma}^i_{jk} \dot{x}^j \dot{x}^k$ so that $\textbf{U}.\textbf{A} = \tilde{h}_{jk} U^j A^k$, etc, and scalar products coincide with our previous definitions when $\tilde{h}$ is flat. Then, a curve is a conformal circle of $[\tilde{h}]$ if it satisfies
\begin{equation}
\frac{dA^i}{ds} + \tilde{\Gamma}^{i}_{jk}A^j U^k = \frac{3 \textbf{U}.\textbf{A}}{|\textbf{U}|^2} A^i - \frac{3|\textbf{A}|^2}{2|\textbf{U}|^2} U^i + |\textbf{U}|^2 U^j P_{j}^i - 2P_{jk}U^j U^k U^i
\label{baileycc}
\end{equation}
where there is no restriction on the parameter $s$. Equation (\ref{baileycc}) is invariant with respect to conformal transformations, $\tilde{h} \rightarrow \Omega^2 \tilde{h}$, and so, conformal circles are defined invariantly by any metric in the conformal class. These curves have arisen in a physical context in \cite{friedschmid} where the authors have used them to discuss the asymptotics of Einstein's equations. Furthermore, properties of ``conformal geodesics'' (lifts of conformal circles to the bundle of second order frames over the manifold endowed with the conformal Cartan connection) in vacuum and warped-product spacetimes have been studied in \cite{friedrich}.
\\It is shown in $\cite{bailey}$ that the conformal circles of a given conformal manifold can be equally defined as the set of integral curves of the system of ODEs
\begin{equation}
\frac{dA^i}{ds} + \tilde{\Gamma}^{i}_{jk}A^j U^k = - |\textbf{A}|^2 U^i + U^j P_j^i - P_{jk} U^j U^k U^i
\label{moreuseful}
\end{equation}
where, here, $s$ is required to be the arc-length parameter of the metric $\tilde{h}$. This formulation was originally given by Yano in \cite{yanocc} and is more useful for our purposes. If we now let $\tilde{h}$ be the flat Riemannian metric, equation (\ref{moreuseful}) reduces to (\ref{confcirc}).
\\Hence, as $c \rightarrow 0$ in the one-centre Kastor-Traschen metric, null geodesics project down to conformal circles of the flat metric in three dimensions.  It is easily verified, that the set of integral curves of (\ref{confcirc}) in three dimensions is precisely the set of all circles in $\mathbb{R}^3$.
\\We should note here that we get the same result when we let $m \rightarrow 0$. Again, the second term on the right-hand side of (\ref{mainequation}) obviously vanishes and (\ref{replacer}) reduces to
\begin{equation*}
\ddot{x}^i = 0,
\end{equation*}
so the third term also vanishes.

\section{Characterisation of Null geodesics}
In the preceding sections, we have determined a system of third-order ODEs which the projected null geodesics of $g$ along the conformal retraction $\Theta$ must satisfy. However, it is not clear that, given an integral curve of (\ref{thirdord}), it will necessarily be the projection of some light ray of $g$. In fact, we can show this not to be the case and it transpires that the third order ODE system (\ref{thirdord}) is not uniquely defined. In this section, we discuss this point and construct a new third order system in three dimensions for which the integral curves constitute a retraction projection of a special set of null curves of the Kastor-Traschen metric, which have a physical interpretation. The projected null geodesics form a subset of these curves which we can characterize.
\\\\For example, let us consider the case $c \rightarrow 0$ for the one-centre metric. Here, we found that the integral curves of (\ref{confcirc}) will be the set of circles in $\mathbb{R}^3$.
\\However, as $c \rightarrow 0$, the metric $g$ becomes static with static Killing vector $\Theta = \frac{\partial}{\partial T}$ and it is well known that the null geodesics of this metric project down to the unparametrised geodesics of the associated optical metric
\begin{equation*}
h_{\text{opt}} = \left(\frac{m}{|\textbf{x}|} \right)^4 h_{jk} dx^j dx^k.
 \end{equation*}
 One can check that the unparametrised geodesics of this metric will be precisely the set of circles in $\mathbb{R}^3$ which pass through the origin. Hence, only a subset of the integral curves of (\ref{confcirc}) will coincide with the projected null geodesics of $g$. Note that this is also the case for $m \rightarrow 0$ where projected null geodesics are described by $\ddot{x}^i = 0$ - straight lines. We can check the general numerology here to see what happens.
\\\\Firstly, we note that the set of unparametrised geodesics of an arbitrary metric $\hat{g}$ on some open set $U \subset \mathbb{R}^n$ will lift to a foliation, by the geodesic spray, of the projectivised tangent bundle $\mathbb{P}(TU)$ which we can think of as a 1-dimensional fibration over some $(2n-2)$-dimensional space, $Z$, with each point in $Z$ coinciding with a unique geodesic in $U$. Hence, the set of unparametrised geodesics of an $n$-dimensional manifold constitute a $(2n-2)$-parameter family of curves.
\\Taking the specific example of the Kastor-Traschen metric $g$, we have $n=4$ and so the number of parameters describing unparametrised geodesics is 6. Invoking the null condition, we see that the retraction projection of unparametrised null geodesics will, in general, constitute a 5-parameter family of curves in $\mathbb{R}^3$ - in the special case where the conformal retraction is a static Killing vector, this is a 4-parameter family.
\\\\On the other hand, let us consider a set of curves on some open set $U \subset \mathbb{R}^n$ described by a system of third order ODEs. If we write this in an unparametrised way - as a set of $(n-1)$ third-order ODEs in terms of one of the coordinates - then we see that the integral curves of this system will lift to a foliation of the jet bundle $J^2(U,\mathbb{R})$ which is $(3n-2)$-dimensional.
\\Hence the unparametrised integral curves of a system of third-order ODEs in $n$ dimensions constitutes a $(3n-3)$-parameter family of paths. When $n=3$, for example, we will have a 6-parameter family of such curves which is consistent with our results above.
\\\\Hence, the set of projected null geodesics of the Kastor-Traschen metric, $g$, will constitute a 5-dimensional subset of the 6-dimensional family of unparametrised integral curves of (\ref{thirdord}) (except in the static case).
\\So, a natural question arises: Given that we construct a third order system (such as (\ref{thirdord})) for which the projected null geodesics form a proper subset of the set of integral curves then to what do the other integral curves correspond?
\\The system (\ref{thirdord}) does not help us to answer this question but we can derive a different third order system which will. For convenience and clarity on this point, let us write our system of equations describing null geodesics of the Kastor-Traschen metric (\ref{propnullgeod}) in three-dimensional vector notation i.e,
\begin{equation*}
\ddot{\textbf{x}} = \frac{2}{V(\textbf{x}) + c T} \left( \nabla V - (\nabla V.\dot{\textbf{x}})\dot{\textbf{x}} \right) = \frac{2}{V+cT} \left( \dot{\textbf{x}} \times (\nabla V \times \dot{\textbf{x}}) \right).
\end{equation*}
Now let us consider a modification of this equation by adding an orthogonal term on the right-hand side, that is
\begin{equation}
\ddot{\textbf{x}} = \frac{2}{V+cT} \left( \dot{\textbf{x}} \times (\nabla V \times \dot{\textbf{x}}) \right) + \lambda (\dot{\textbf{x}} \times \nabla V)
\label{magfield}
\end{equation}
where $\lambda$ is a constant. Clearly, null geodesics satisfy this equation for $\lambda = 0$. More interestingly, there is a six-parameter family of curves which satisfy this equation for some value of $\lambda$. Hence, we might expect these curves to be the integral curves of some third order system in three dimensions which is independent of $\lambda$.
\\First from ({\ref{magfield}), we can derive the following equations by taking specific scalar products:
\begin{eqnarray}
|\ddot{\textbf{x}}|^2 &=& \frac{2}{V+cT} \nabla V.\ddot{\textbf{x}} + \lambda (\dot{\textbf{x}} \times \nabla V). \ddot{\textbf{x}} \nonumber\\
\ddot{\textbf{x}}.\nabla V &=& \frac{2}{V+cT} |\nabla V \times \dot{\textbf{x}} |^2 \nonumber \\
\lambda &=& \frac{\ddot{\textbf{x}}.(\dot{\textbf{x}} \times \nabla V)}{|\dot{\textbf{x}} \times \nabla V |^2}.
\label{diffsys}
\end{eqnarray}
Differentiating (\ref{magfield}) and simplifying using the first two equations of (\ref{diffsys}), we derive the system of third order ODEs
\begin{eqnarray*}
\dddot{\textbf{x}} &=& -|\ddot{\textbf{x}}|^2 \dot{\textbf{x}} -\frac{3}{2} \frac{\ddot{\textbf{x}}.\nabla V}{|\dot{\textbf{x}} \times \nabla V|^2}(\dot{\textbf{x}}.\nabla V) \ddot{\textbf{x}} - 2c (\dot{\textbf{x}} \times (\nabla V \times \dot{\textbf{x}})) + \frac{\ddot{\textbf{x}}.\nabla V}{|\dot{\textbf{x}} \times \nabla V|^2} \left( \dot{\textbf{x}} \times \left( \frac{d \nabla V}{ds} \times \dot{\textbf{x}} \right) \right) \\
&+& \lambda \left[ \left( \frac{\ddot{\textbf{x}}.\nabla V}{|\dot{\textbf{x}} \times \nabla V|^2} \right) (\dot{\textbf{x}}.\nabla V) \ddot{\textbf{x}} \times \nabla V + ((\dot{\textbf{x}} \times \nabla V).\ddot{\textbf{x}})\dot{\textbf{x}} + \ddot{\textbf{x}} \times \nabla V + \dot{\textbf{x}} \times \frac{d \nabla V}{ds} \right].
\end{eqnarray*}
For $\lambda = 0$, this system of equations reduces to (\ref{thirdord}) with the $\frac{2}{V+cT}$ term replaced using (\ref{niceprop}) as expected. We can eliminate $\lambda$ from this equation using the third equation of (\ref{diffsys}). Notice then that the vanishing of $\lambda$ coincides with $\ddot{\textbf{x}}.(\dot{\textbf{x}} \times \nabla V) = 0$ i.e, the vectors $\ddot{\textbf{x}}, \dot{\textbf{x}}$ and $\nabla V$ lie in the same plane. In particular, for the one centre case, the projections of null geodesics lie in a plane through the origin (centre).
Overall, we have the following result
\\\\\textbf{Proposition 5.1} \emph{If $c \neq 0$, the retraction projection of the set of null curves satisfying (\ref{magfield}) for some value of $\lambda$ coincides with the set of integral curves of}
\begin{eqnarray}
\dddot{\textbf{x}} &=& -|\ddot{\textbf{x}}|^2 \dot{\textbf{x}} -\frac{3}{2} \frac{\ddot{\textbf{x}}.\nabla V}{|\dot{\textbf{x}} \times \nabla V|^2}(\dot{\textbf{x}}.\nabla V) \ddot{\textbf{x}} - 2c (\dot{\textbf{x}} \times (\nabla V \times \dot{\textbf{x}})) + \frac{\ddot{\textbf{x}}.\nabla V}{|\dot{\textbf{x}} \times \nabla V|^2} \left( \dot{\textbf{x}} \times \left( \frac{d \nabla V}{ds} \times \dot{\textbf{x}} \right) \right) \nonumber \\
&+& \frac{\ddot{\textbf{x}}.(\dot{\textbf{x}} \times \nabla V)}{|\dot{\textbf{x}} \times \nabla V |^2} \left[ \left( \frac{\ddot{\textbf{x}}.\nabla V}{|\dot{\textbf{x}} \times \nabla V|^2} \right) (\dot{\textbf{x}}.\nabla V) \ddot{\textbf{x}} \times \nabla V + ((\dot{\textbf{x}} \times \nabla V).\ddot{\textbf{x}})\dot{\textbf{x}} + \frac{d}{ds} (\dot{\textbf{x}} \times \nabla V) \right].
\label{bigsys}
\end{eqnarray}
\emph{Furthermore, the projected null geodesics are precisely the integral curves of this system for which} $\ddot{\textbf{x}}.(\dot{\textbf{x}} \times \nabla V) = 0$.

\begin{proof}
As we have shown, any integral curve of (\ref{magfield}) satisfies (\ref{bigsys}). To verify the reverse inclusion, we just need to consider the initial data unique to one integral curve $\gamma$ of (\ref{bigsys}) which will be given by seven parameters - three for initial position, two for initial unit velocity and two for initial acceleration (perpendicular to the velocity vector). By varying the values of $T$ and $\lambda$, it is clear that there is an integral curve of (\ref{magfield}) with the same initial data and its projection necessarily coincides with $\gamma$.
\end{proof}
The proposition doesn't work for $c = 0$ as we cannot use $T$ as a parameter for the initial acceleration data in the above proof.

\subsection*{Magnetic Flow}
The addition of this extra $\lambda$ term may seem a little \emph{ad hoc} here but is actually a sensible choice when we see the proof of this proposition i.e, we need to add a term orthogonal to $\dot{\textbf{x}}$ but not in the direction of $\dot{\textbf{x}} \times (\nabla V \times \dot{\textbf{x}})$. Furthermore, this system of equations (\ref{magfield}) can be interpreted as describing a magnetic flow in the background of the Kastor-Traschen metric with magnetic field $\textbf{B} \propto \nabla V$. This is precisely the magnetic field we saw in (\ref{newf}) when discussing the freedom in the 2-form $F$ and so these additional integral curves occupy a significant role in the geometry of the Kastor-Traschen metric.

\subsection{A Solution in the One-Centre Case}
Let $\varphi = 4mc$ and define a curve in the plane $x^3 = 0$ by
\begin{equation}
x^i (s) = \left( \varphi s \cos \left( \frac{\sqrt{1-\varphi^2}}{\varphi} \log (\varphi s) \right), \varphi s \sin \left( \frac{\sqrt{1-\varphi^2}}{\varphi} \log (\varphi s) \right), 0 \right).
\label{samplenull}
\end{equation}
Then this curve satisfies $h_{jk} \dot{x}^j \dot{x}^k = 1$ and is an integral curve of the system of ODEs (\ref{bigsys}) for $V = \frac{m}{|\textbf{x}|}$. Furthermore, since it lies on a plane through the origin, we know, by Proposition 5.1, that it must be the retraction projection of a null geodesic of $g$.
\\We can plot this curve in the plane and realise that it is just a reparametrised logarithmic spiral.
\begin{figure}[h]
\centering \includegraphics[width=4cm,height=3cm]{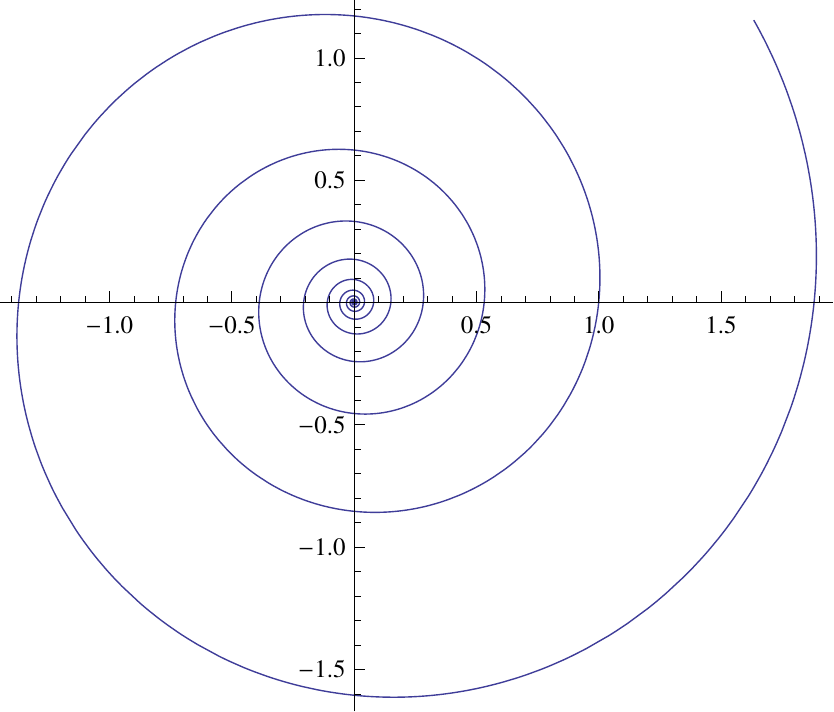}
\caption{Logarithmic Spiral with $\varphi = 0.1$}
\end{figure}
This example is motivated by work in section 6 where we will show how to derive analytic expressions for the projected null geodesics of the one-centre K-T metric and thus, the integral curves of (\ref{mainequation}) which lie on a plane through the origin. We should note here, that the limit $c \rightarrow 0$, for this example, is ill-defined. By L'Hopital's Rule, both expressions in (\ref{samplenull}) tend to zero, in this limit, for any value of $s$.

\section{Geodesics obtained from Extremal RNdS}
In this section, we will make use of a diffeomorphism between the one-centre Kastor-Traschen and the extremal Reissner-Nordstrom deSitter metrics. The advantage of this is that analytic solutions for the null geodesic equations of the RNdS metrics are well known \cite{hackmann}. We will show how to derive these solutions which will enable us to obtain analytic solutions in the Kastor-Traschen coordinates and plot the retraction projection of the null geodesics in some cases.

\subsection{RNdS Transformation}
We begin with the special Kastor-Traschen metric with potential $V = \frac{m}{|\textbf{x}|}$ and now use spherical polar coordinates ($|\textbf{x}| = R$) to represent the flat metric $h$ i.e,
\begin{equation}
g = - \frac{1}{\left(\frac{m}{R} + c T \right)^2}dT^2 + \left( \frac{m}{R} + cT \right)^2 (dR^2 + R^2(d\theta^2 + \sin^2 \theta d \phi^2)).
\label{KTform}
\end{equation}
Assuming that $c \neq 0$ we can make the coordinate transformation
\begin{equation}
R = e^{-cS} \,\,\,\,\,,\,\,\,\,\, T = \frac{r-m}{c} e^{cS}.
\label{transforig}
\end{equation}
If we choose $t$ such that $dt = dS + \frac{r-m}{c \Delta_r} dr$, where $\Delta_r = (r-m)^2 - c^2 r^4$, then the metric becomes
\begin{equation}
g = - \frac{\Delta_r}{r^2} dt^2 + \frac{r^2}{\Delta_r} dr^2 + r^2(d\theta^2 + \sin^2 \theta d\phi^2).
\label{RNdSmet}
\end{equation}
The analogous coordinate transformation in the Riemannian case was given in \cite{DGST}. Here, the resulting metric is the Reissner-Nordstrom deSitter spacetime in the extremal case with charge $Q = m$ and $c = \pm \sqrt{\frac{\Lambda}{3}}$ where $\Lambda$ is the cosmological constant. From now on, we will refer to $t$ as \emph{static time} in order to distinguish it from the time $T$. We also take $c > 0$ as the definition of (\ref{RNdSmet}) is invariant with respect to $c \rightarrow -c$ and we can compensate for it in the Kastor-Traschen case by sending $T \rightarrow -T$.
\\Null and timelike geodesics of black hole spacetimes with cosmological constant were studied extensively in \cite{hackmann}. In particular, the author discussed the different types of orbits possible for Reissner-Nordstrom metrics and showed how to derive analytic formulae for the geodesics. The RNdS metric admits the timelike static Killing vector $\frac{\partial}{\partial t}$ and we can plot the projection of the null geodesics to the space of orbits of $\frac{\partial}{\partial t}$, analogous to what was done in \cite{hackmann}, which we call the \emph{static projection}. We then use this information to plot the null geodesics in the Kastor-Traschen metric projected along the conformal retraction $\Theta$, which, by Proposition 5.1, will be solutions of the system (\ref{bigsys}) and lie on a plane through the origin.
\\Null geodesics of the RNdS metric (\ref{RNdSmet}) satisfy
\begin{equation}
-\frac{\Delta_r}{r^2} \dot{t}^2 + \frac{r^2}{\Delta_r} \dot{r}^2 + r^2 (\dot{\theta}^2 + \sin^2\theta \, \dot{\phi}^2) = 0.
\label{null}
\end{equation}
The Euler-Lagrange equation for $\theta$ gives
\begin{equation*}
\frac{d}{d s}(2r^2 \dot{\theta}) = 2 r^2 \sin \theta \cos \theta \dot{\phi}^2,
\end{equation*}
where $s$ parameterizes the curves. By a choice of axes, we can set the initial conditions to be $\dot{\theta} = 0$, $\theta = \frac{\pi}{2}$, which results in motion in the equatorial plane - this coincides with the results of Proposition 5.1. Similarly, from the Euler-Lagrange equation for $\phi$ and $t$ we find that
\begin{eqnarray*}
r^2 \dot{\phi} &=&  \Phi \\
- \frac{2 \Delta_r}{r^2} \dot{t} &=& -2E
\end{eqnarray*}
where $\Phi$ and $E$ are constants. Let us focus our attention on non-radial geodesics and assume that $\Phi > 0$ so that the geodesics are traced out in the direction of increasing $\phi$. The null equation (\ref{null}) implies that
\begin{equation*}
- \frac{r^2}{\Delta_r} E^2 + \frac{r^2}{\Delta_r} \dot{r}^2 + \frac{\Phi^2}{r^2} = 0
\end{equation*}
or
\begin{equation}
\dot{r}^2 = E^2 - \frac{\Delta_r}{r^4} \Phi^2 \equiv E^2 - V_{eff}.
\label{veffeq}
\end{equation}
Here $V_{eff}$ is the effective potential of the system which we can plot as a function of $r$.
\begin{figure}[h]
\centering \includegraphics[width=6cm,height=3cm]{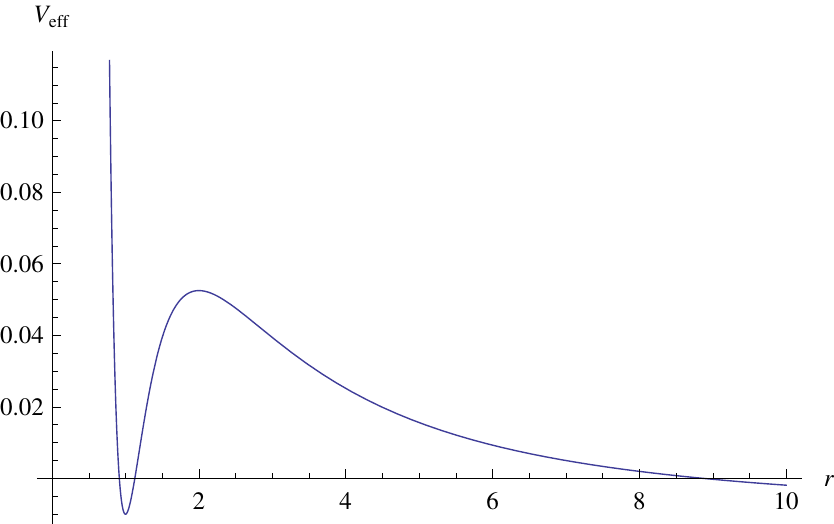}
\caption{Plot of $V_{eff}$ as a function of $r$}
\end{figure}
\\We note here that physically acceptable regions for null geodesics are those for which $E^2 \geq V_{eff}$. From the diagram, it is clear that we have two kinds of orbits - \emph{bound orbits} (where $r$ oscillates between two boundary values) and unbound \emph{flyby orbits} (where $r$ starts at $\infty$, then approaches a periapsis and goes back to $\infty$). This is consistent with the results of $\cite{hackmann}$ and we use the same terminology.

\subsection{Circular Orbits}
As was observed in \cite{stuchlik}, the extremum values of $V_{eff}$ occur at $r = m$ and $r=2m$, independent of the cosmological constant (in this case, independent of $c$), resulting in circular orbits. The minimum of this function will not always be non-negative and, since $E^2 \geq 0$, it will not be attained for some values of $c$. Indeed, unless $c=0$, we will not get a circular orbit at $r=m$. However, the transformation for $R$ and $T$ does not behave well in this limit and therefore does not allow us to see what this circular orbit corresponds to in the Kastor-Traschen coordinates.
\\On the other hand, we will observe a circular orbit at $r=2m$ as long as $4mc \leq 1$ (beyond this, the local maximum drops below the axis in Figure 2). For this solution, there will be some factors of $c$ in the static time variable $t$ and hence, when we make the coordinate transformation to $R$ and $T$, the orbit in the retraction projection will be dependent on $c$. In fact, when we do this, the resulting orbit is just essentially a constant multiple of that given section 5 - a logarithmic spiral in some hyperplane of the space of orbits of the conformal retraction $\Theta$, which passes through the origin.

\subsection{Horizon Structure and Nature of Orbits}
Clearly, for $E^2 > V_{max}$, all orbits are unbound in the RNdS coordinates whereas for $E^2 < V_{max}$, we can get both bound and unbound orbits. If $m = 0$ then this graph becomes $V_{eff} = \frac{1}{r^2} + const$, and there are no bound orbits.
\\Using the chain rule and the E-L equation for $\phi$, we can rewrite equation (\ref{veffeq}) as
\begin{equation}
\left( \frac{dr}{d\phi} \right)^2 = (\kappa^2 + c^2)r^4 - (r-m)^2
\label{maineq}
\end{equation}
where $\kappa = \frac{E}{\Phi}$. Also,
\begin{equation*}
\frac{dt}{d\phi} = \kappa \frac{r^4}{\Delta_r}.
\end{equation*}
Note here that roots of $\Delta_r$ (which we inevitably cross for some orbits) will cause infinities in $\frac{dt}{d\phi}$ and in $t$ itself - this will lead to a null geodesic tracing out a finite path in an infinite amount of static time (see analysis). Radii at which $V_{eff} = 0$, or equally $\Delta_r = 0$, correspond to horizons. In particular, if:
\begin{itemize}
\item $4mc < 1$; there are three horizons (two black hole horizons and one cosmological horizon) with $r_{bh-} < r_{bh+} < r_{ch}$. The geometry is static for $r < r_{bh-}$, $r_{bh+} < r < r_{ch}$ and corresponds to a black hole in a de-Sitter universe.
\item $4mc > 1$; there is only one cosmological horizon. The geometry is static for $r < r_{ch}$ and corresponds to a naked singularity in a de-Sitter universe.
\end{itemize}
The metric (\ref{RNdSmet}) has a singularity at $r=0$ which is covered by the Cauchy horizon $r = r_{bh-}$ in the case $4mc < 1$. The surface gravity at this inner horizon is larger than that at the cosmological horizon, in particular,
\begin{equation*}
\kappa_{bh-}^2 - \kappa_{ch}^2 = 8mc^3 > 0
\end{equation*}
and so, by a result in \cite{mellor}, the Cauchy horizon is unstable. Hence, some of the trajectories in the Reissner-Nordstrom-deSitter metric will be unphysical (in particular when $r < r_{bh-}$). Furthermore the case $4mc > 1$ presents a possible violation of Penrose's cosmic censorship conjecture \cite{penrose} and may therefore also be unphysical.
\\It was noted in \cite{romans} that, for the Reissner-Nordstrom deSitter metric in the extreme charge equal mass case, the Hawking temperature of the outer black hole horizon is the same, in magnitude, as that of the cosmological horizon endowing a notion of thermodynamic stability among all RNdS solutions.
\\The plots of projected null geodesics can be obtained in the Kastor-Traschen framework by solving equation (\ref{maineq}) and making a coordinate transformation $(t,r) \rightarrow (T,R)$. As an example, we will perform this calculation for bound orbits in the three-horizon case (when $4mc <1$). The other curves can be similarly obtained by the reader but do not add much to the discussion. In the following example, we highlight when a trajectory is physical or when it is purely of mathematical interest.
\subsection*{Bound Orbits in Kastor-Traschen with Three Horizons}
This case corresponds to the inequalities
\begin{equation*}
4m\lambda < 1 \,\,\,\,\,,\,\,\,\,\, \frac{-1+\sqrt{1+4m\lambda}}{2\lambda} \leq r \leq \frac{1-\sqrt{1-4m\lambda}}{2\lambda}.
\end{equation*}
For bound orbits, any solution of (\ref{maineq}) will oscillate between the two extremal values for $r$ given above. Beginning with this equation, we can rearrange to get
\begin{equation}
\pm \int \frac{dr}{\sqrt{\lambda^2 \left( r^4- \left(\frac{r-m}{\lambda} \right) \right)}} = \phi + \gamma,
\label{rphirel}
\end{equation}
where $\gamma$ is a constant of integration. With the given bounds on $r$, we can integrate the left hand side of this equation and obtain
\begin{equation*}
\mp \frac{2 F \left( \arcsin \left(\sqrt{\frac{2 \sqrt{1-4m\lambda}(-1-2r\lambda+\sqrt{1+4m\lambda})}{(1-2r\lambda+\sqrt{1-4m\lambda})(-2+\sqrt{1+4m\lambda}+ \sqrt{1-4m\lambda})}}\right) , \frac{1}{2} - \frac{1}{2\sqrt{1-16m^2 \lambda^2}} \right)}{(1-16 m^2\lambda^2)^{1/4}} = \phi + \gamma
\end{equation*}
where $F$ is the elliptic integral of the first kind. We can solve this equation for $r = r(\phi)$ to get
\begin{equation*}
r(\phi) = \frac{\sin^2(J_{\mp}(\phi))(1+\sqrt{1-4m\lambda})(\sqrt{1+4m\lambda}+\sqrt{1-4m\lambda}-2)+2\sqrt{1-4m\lambda}(1-\sqrt{1+4m\lambda})}{2\lambda(\sin^2(J_{\mp}(\phi))(\sqrt{1+4m\lambda}+\sqrt{1-4m\lambda}-2)-2\sqrt{1-4m\lambda})}
\end{equation*}
where
\begin{equation*}
J_{\mp} (\phi) = \text{Jac} \left( \mp \frac{(1-16m^2\lambda^2)^{1/4}}{2}(\phi + \gamma), \frac{1}{2} - \frac{1}{2\sqrt{1-16m^2\lambda^2}} \right)
\end{equation*}
and Jac is the Jacobi Amplitude for the elliptic integral (i.e, $F(a,b) = c \Rightarrow a = \text{Jac}(c,b)$). To obtain the oscillatory solution we use $J_{(-1)^n}(\phi)$ whenever
\begin{equation*}
-\frac{2 F\left( \frac{n-1}{2}\pi , \frac{1}{2} - \frac{1}{2\sqrt{1-16m^2\lambda^2}}\right)}{(1-16m^2\lambda^2)^{1/4}} > \phi + \gamma \geq - \frac{2 F\left( \frac{n}{2} \pi , \frac{1}{2} - \frac{1}{2\sqrt{1-16m^2\lambda^2}}\right)}{(1-16m^2\lambda^2)^{1/4}}.
\end{equation*}
We can now plot the static projection of the null geodesics of the RNdS metric. For this purpose, we take the values $m=1$, $\kappa = \frac{1}{6}$, $c = \frac{1}{8}$ and $\gamma = 0$.
\begin{figure}[h]
\begin{minipage}[b]{0.5\linewidth}
\centering \includegraphics[width=3.6cm,height=3.6cm]{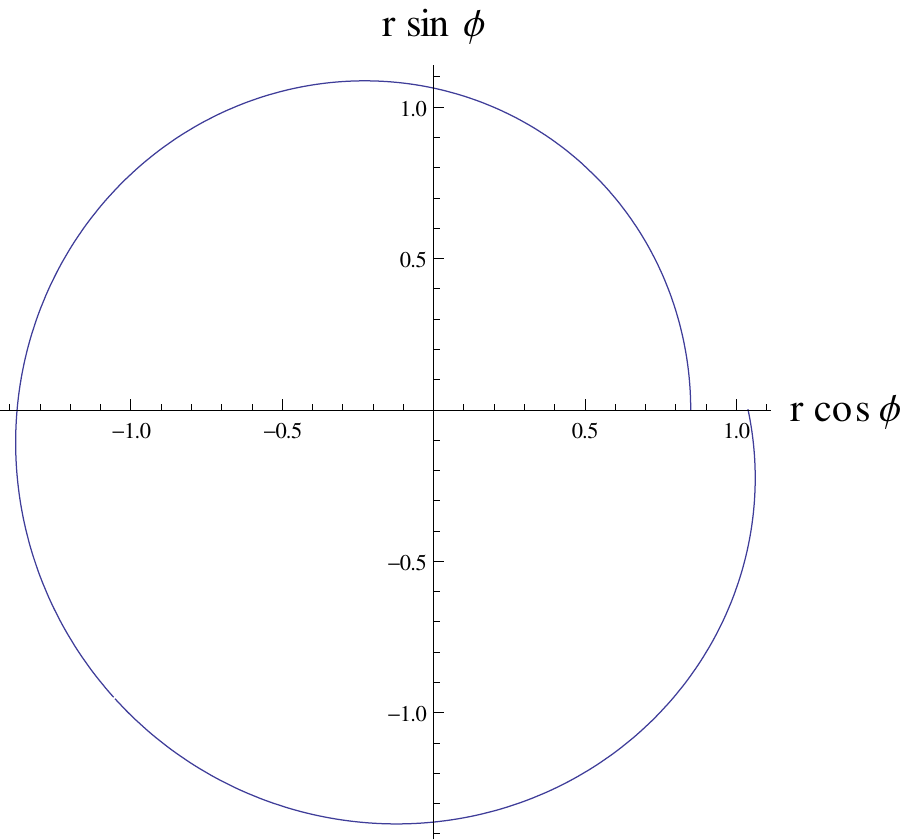}
\caption{Static projection of null geodesic, $0\leq \phi \leq 2\pi$}
\end{minipage}
\hspace{0.5cm}
\begin{minipage}[b]{0.5\linewidth}
\centering \includegraphics[width=3.6cm, height=3.6cm]{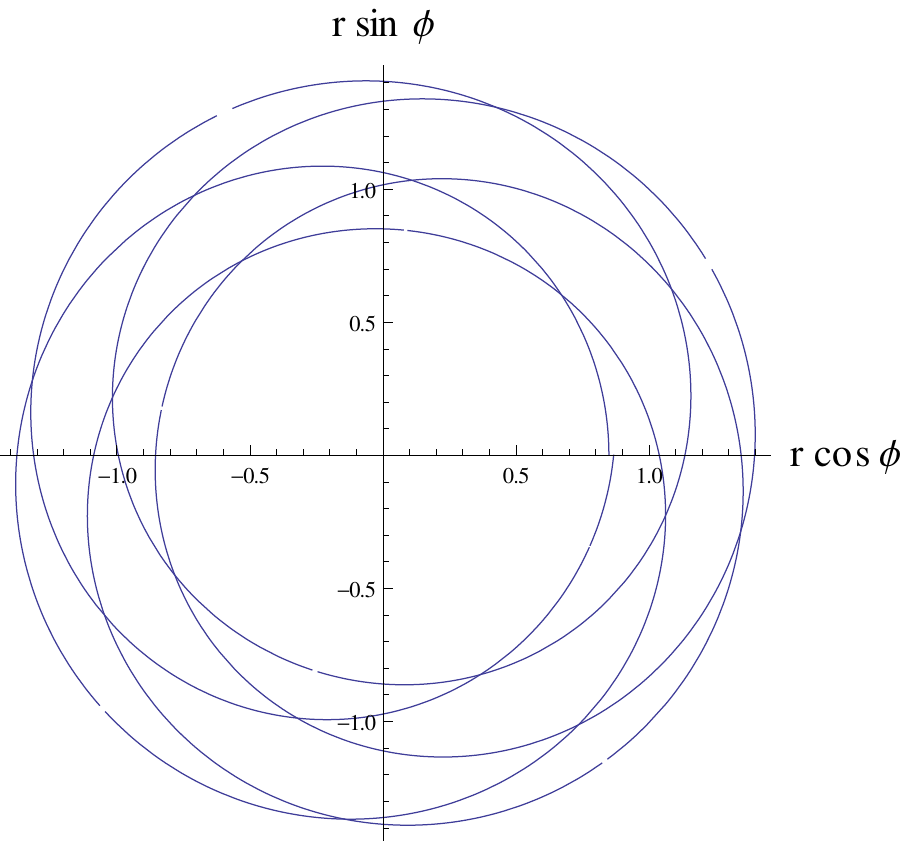}
\caption{$0\leq \phi \leq 10\pi$}
\end{minipage}
\end{figure}
\\As we mentioned before, it is important that we be careful here with respect to the range of the static time coordinate. In the above plots, the function $r$ will, at several stages, cross a value for which $\Delta_r = 0$ and satisfy $r < r_{bh-}$, where the trajectory is unphysical. This is reflected by the fact that each horizon crossing leads to an infinity in the static time $t$. For example, if we take a segment of this orbit which passes from the $r_{bh-}$ to $r_{bh+}$, we obtain the following plots for the geodesic itself and $t$ as a function of $\phi$ on this range.
\begin{figure}[h!]
\begin{minipage}[b]{0.5\linewidth}
\centering \includegraphics[width=4.5cm,height=4.5cm]{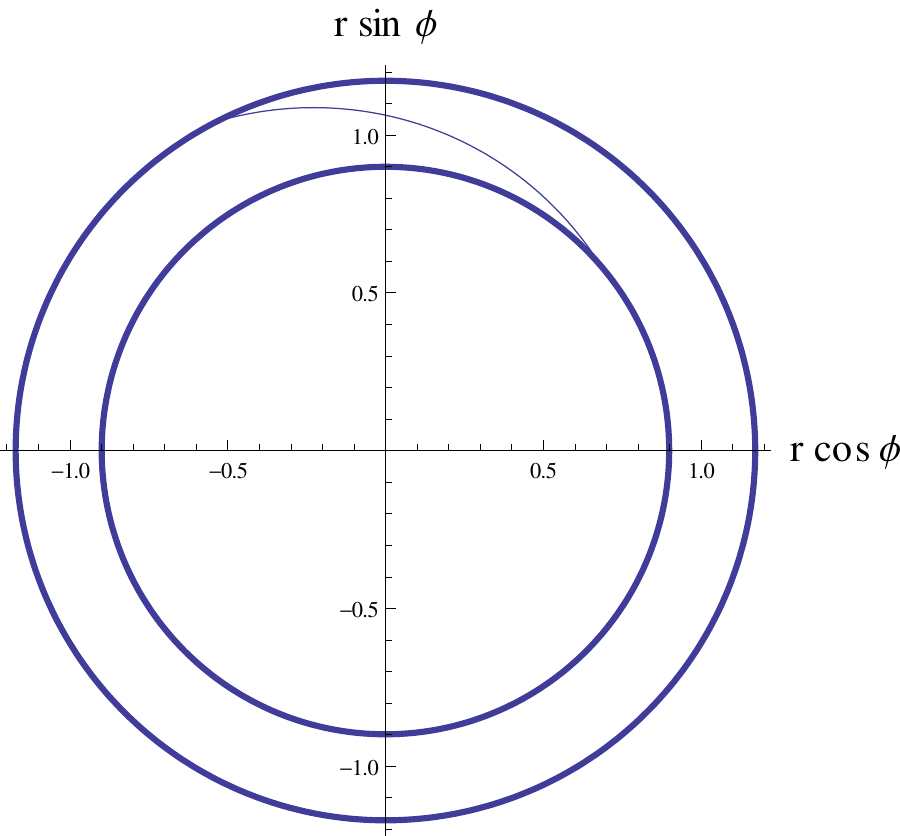}
\caption{Null geodesic traversing between black hole horizons $r_{bh-}$ and $r_{bh+}$.}
\end{minipage}
\hspace{0.5cm}
\begin{minipage}[b]{0.5\linewidth}
\centering \includegraphics[width=4.5cm, height=4.5cm]{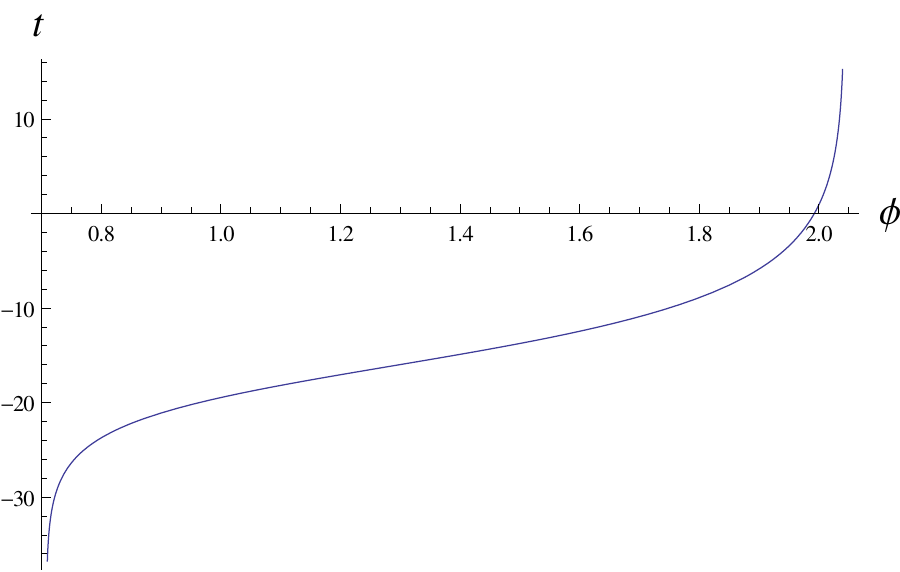}
\caption{Static time $t$ as a function of $\phi$ on this interval.}
\end{minipage}
\end{figure}
\newpage
Using the transformation (\ref{transforig}) together with the subsequent one for $S$, we can plot the retraction projection of this null geodesic in the Kastor-Traschen coordinates. Similarly, we can determine the time $T$ as a function of $\phi$ and we get the following plots:
\begin{figure}[h!]
\begin{minipage}[b]{0.5\linewidth}
\centering \includegraphics[width=4.5cm,height=4.5cm]{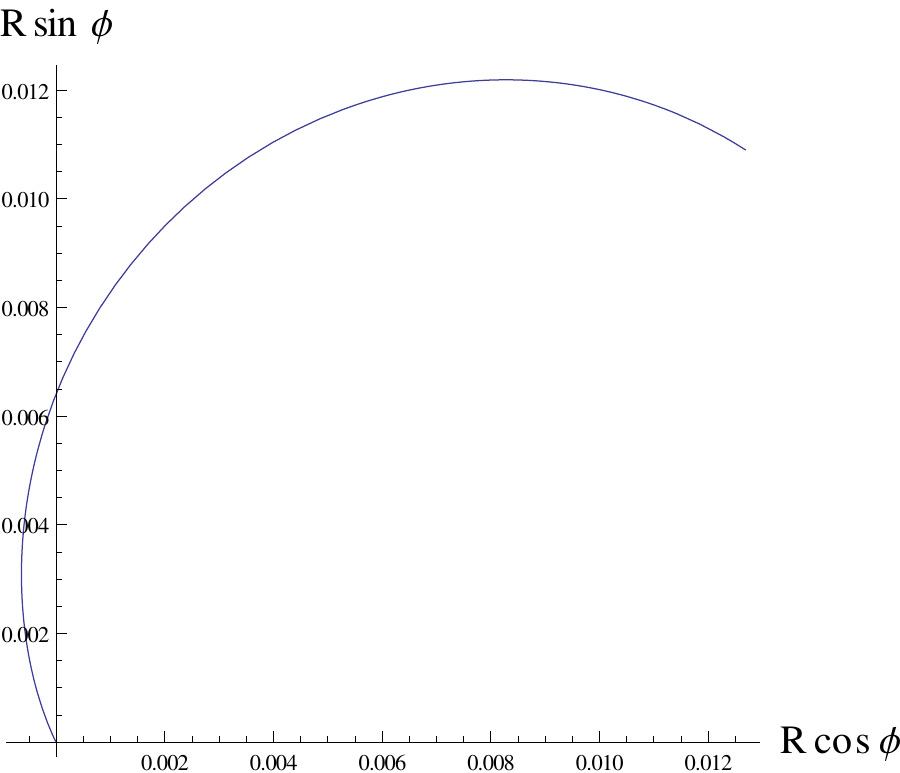}
\caption{Retraction projection of null geodesic of the K-T metric}
\end{minipage}
\hspace{0.5cm}
\begin{minipage}[b]{0.5\linewidth}
\centering \includegraphics[width=4.5cm, height=4.5cm]{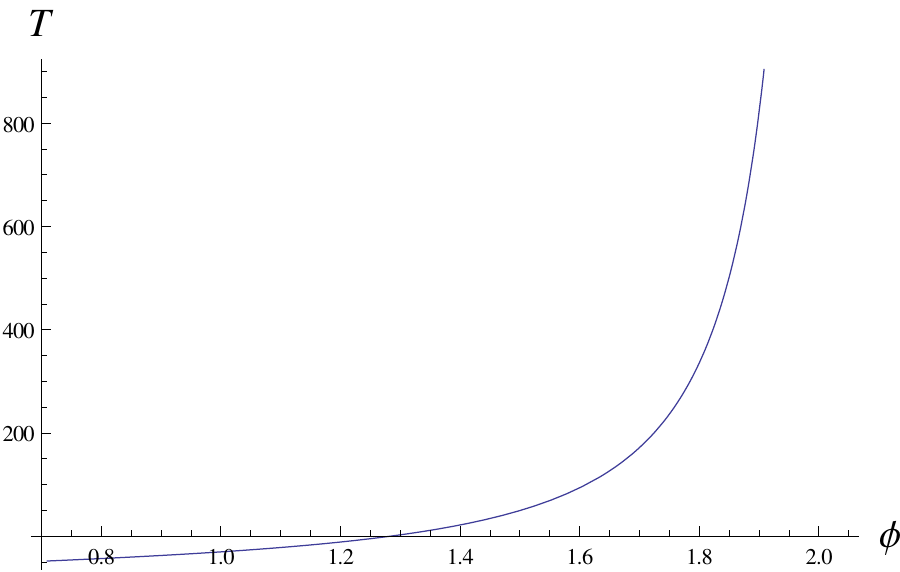}
\caption{Time $T$ as a function of $\phi$ on this interval.}
\end{minipage}
\end{figure}
\\As $R \rightarrow 0$, the time $T \rightarrow \infty$. However, we see that as $R$ approaches the finite positive value, $T$ also approaches a finite value. This suggests that we can extend our geodesic in the direction of decreasing $T$ and continue the curve in the retraction projection. As one would expect, this can be done and the extension can be constructed by considering the part of the geodesic in the RNdS coordinates which begins at $r_{bh-}$, decreases to the minimum value of $r$ and then increases to $r_{bh-}$ again. Of course, this curve lies completely within the Cauchy horizon and so, it is an unphysical extension but we can discuss it mathematically nonetheless.
\\(\textbf{Note:} For this construction to provide the correct extension of the null geodesic, we must set the static time $t$ to be decreasing on this interval).
\\Hence, we can plot the full retraction projection of this null geodesic in the Kastor-Traschen coordinates. We include a plot of the time $T$ here as a function of $\phi$ to demonstrate that it is, indeed, the projection of the whole geodesic - here the null geodesic begins at the spatial origin $R=0$ at $T \rightarrow -\infty$, traces the curve in the direction of increasing $\phi$ and returns to the origin as $T \rightarrow +\infty$:
\begin{figure}[h]
\begin{minipage}[b]{0.5\linewidth}
\centering \includegraphics[width=4.5cm,height=4.5cm]{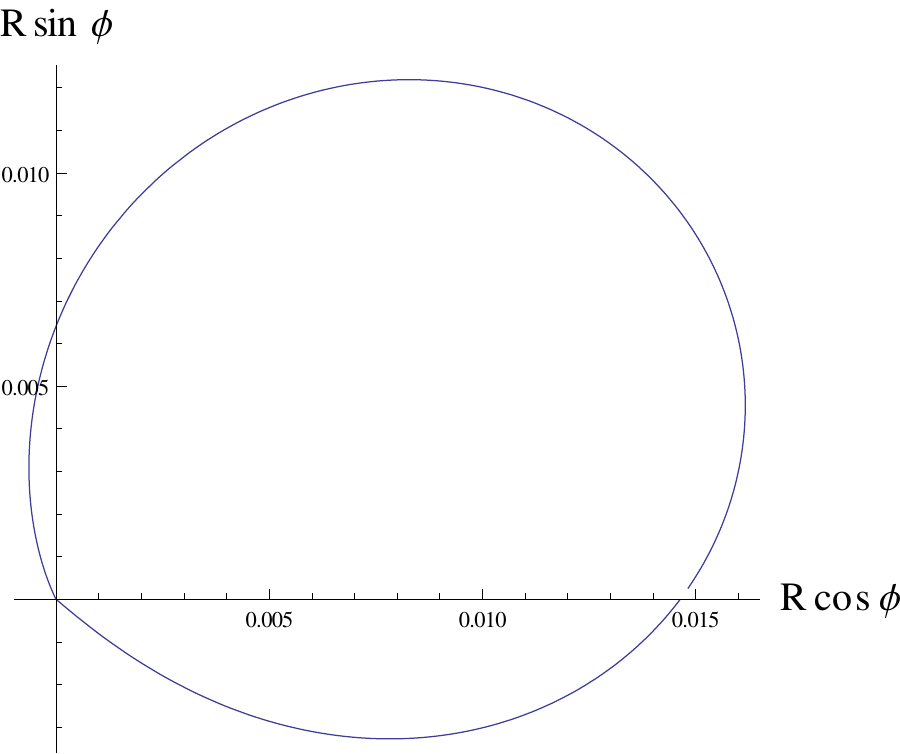}
\caption{Retraction projection of null geodesic in K-T coordinates.}
\end{minipage}
\hspace{0.5cm}
\begin{minipage}[b]{0.5\linewidth}
\centering \includegraphics[width=4.5cm, height=4.5cm]{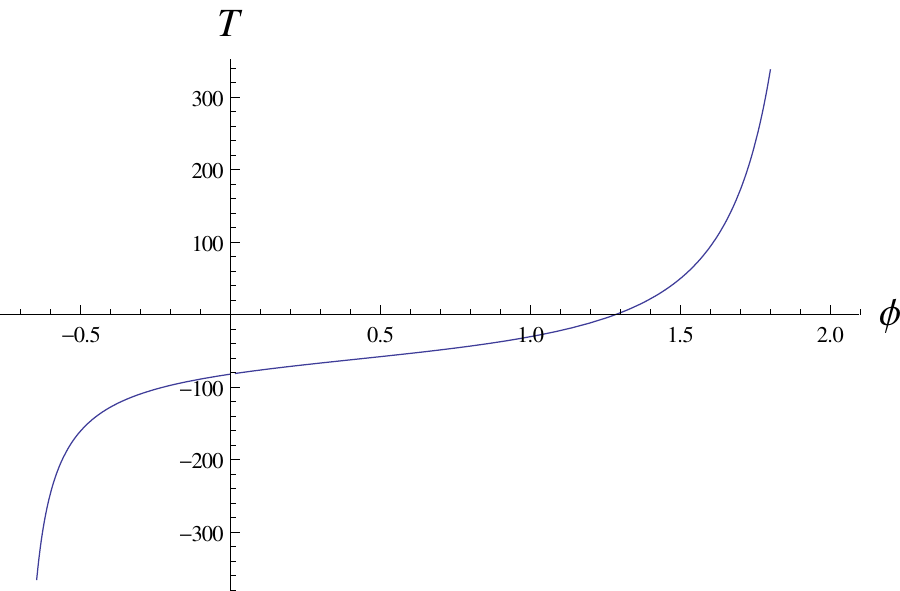}
\caption{Time $T$ as a function of $\phi$ on this interval.}
\end{minipage}
\end{figure}

\subsubsection*{Conformal Diagram}
To gain a better understanding of the geometry here, let us plot the trajectory of this null geodesic in the Penrose-Carter diagram of the spacetime. The authors in \cite{ackaymatzner} have constructed this diagram and have highlighted the region covered by the ``cosmological'' (Kastor-Traschen) coordinates. We give a copy of this diagram and include a null geodesic which runs from a point at $r<r_{bh-}$ to the outer black hole horizon $r_{bh+}$.
\begin{figure}[h]
 \begin{center}
 \label{pencart}
 \includegraphics[width=6cm,height = 6cm, angle=0]{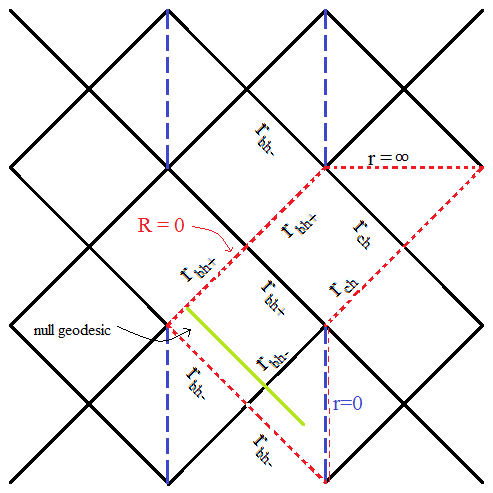}
 \caption{Penrose-Carter diagram of extremal Reissner-Nordstrom deSitter spacetime. The region bounded by the red dotted line represents a single $(T,R)$ chart.}
 \end{center}
\end{figure}
\\The authors of \cite{ackaymatzner} make the point that the cosmological coordinates ``smoothly cover the entire region from $r=0$ to $r = \infty$.'' From the diagram, we see that a single $(T,R)$ chart (bounded by the dotted red line) covers four $(t,r)$ charts encompassing all three horizons. As the light ray crosses the unstable Cauchy horizon, the static time $t \rightarrow \infty$. However in the Kastor-Traschen coordinates, we only have $T \rightarrow \infty$ as $R \rightarrow 0$ and so, mathematically, the Kastor-Traschen metric does not see a time singularity at the inner Cauchy horizon.

\section{Two-Centre Case}
Here, we make some remarks about the Kastor-Traschen metric with two black holes ($N=2$). Analytic expressions for null geodesics are much more difficult to obtain but we can make some remarks about the general theory. First of all, let us write our Kastor-Traschen metric $g$ in \emph{cylindrical polar} coordinates.
\begin{equation}
- \frac{1}{(V+cT)^2} dT^2 + (V+cT)^2 (d \rho^2 + \rho^2 d\phi^2 + dz^2)
\end{equation}
so that, without loss of generality, the singularities are placed on the $z$-axis, equidistant from the origin. That is
\begin{equation*}
V = \frac{m_1}{(\rho^2+(z-w)^2)^{1/2}} + \frac{m_2}{(\rho^2+(z+w)^2)^{1/2}}
\end{equation*}
where $m_1$ and $m_2$ are the black hole masses and $2w$ is the distance between the centres. For this metric, we have the following result regarding the retraction projection of null geodesics.
\begin{twocent}
If the retraction projection of a null geodesic has the property that its initial position and velocity lie in a plane passing through the two centres, then the entire projected null geodesic lies in that plane.
\end{twocent}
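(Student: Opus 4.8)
The plan is to exploit the reflection symmetry of the two-centre potential together with uniqueness for the geodesic equation. First I would set up the geometry cleanly: let $\Pi$ be a plane containing both centres, which (since the centres sit on the $z$-axis) is precisely a plane containing the $z$-axis, i.e.\ a half-plane $\phi = \text{const}$ together with its opposite $\phi = \text{const} + \pi$, forming a full plane through the axis. The key structural fact is that $V$ depends only on $\rho$ and $z$ and is invariant under the reflection $\sigma$ that fixes $\Pi$ pointwise and sends a point to its mirror image across $\Pi$. Concretely, writing the plane as the fixed-point set of an isometry $\sigma$ of the flat metric $h$, one checks that $V \circ \sigma = V$ because both source terms depend only on the distances to points lying on the axis $\subset \Pi$, which are preserved by $\sigma$.

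Next I would argue that $\sigma$ lifts to a symmetry of the whole null-geodesic system. The cleanest route is to work with the spatial equation for null geodesics, equation (\ref{propnullgeod}),
\begin{equation*}
\ddot{x}^i + \hat{\Gamma}^i_{jk}\dot{x}^j\dot{x}^k = \frac{2}{V+cT}\left(h^{il}\frac{\partial V}{\partial x^l} - \frac{\partial V}{\partial x^k}\dot{x}^k\dot{x}^i\right),
\end{equation*}
together with the evolution of $T$ fixed by the null condition. Since $\sigma$ is an isometry of $h$ fixing $V$ (hence $\nabla V$ is $\sigma$-equivariant and its normal component vanishes on $\Pi$), the right-hand side is equivariant under the simultaneous action of $\sigma$ on position and velocity, with $T$ left untouched. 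Therefore if $x(s)$ is a solution then $\sigma(x(s))$ is also a solution, carrying the same $T(s)$.

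The final step is the uniqueness argument. Suppose the initial position $x(0)$ and initial velocity $\dot{x}(0)$ both lie in $\Pi$ (velocity tangent to $\Pi$). Then the initial data are fixed by $\sigma$, and so are the initial $T$ and $\dot T$ data, which are determined by the null condition and depend only on $V$ and the velocity. By the equivariance just established, $\sigma(x(s))$ solves the same second-order system with identical initial data. The geodesic equation has a smooth right-hand side away from the singularities, so Picard--Lindel\"of gives a unique solution, forcing $\sigma(x(s)) = x(s)$ for all $s$; hence $x(s)$ remains in the fixed-point set $\Pi$. I expect the main obstacle to be verifying the equivariance of the $T$-evolution rigorously rather than just the spatial part: one must confirm that the null condition determines $\dot T$ symmetrically and that the $\frac{2}{V+cT}$ prefactor, which couples the spatial motion to $T$, respects the reflection. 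A secondary subtlety is domain of validity — ensuring the trajectory does not run into a centre where the right-hand side blows up — but as a statement about the geodesic while it exists, the symmetry-plus-uniqueness argument is complete.
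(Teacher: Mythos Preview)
Your argument is correct and takes a genuinely different route from the paper. The paper works directly in cylindrical coordinates, observes that planes through both centres are exactly the surfaces $\phi=\text{const}$, and reads off the $\phi$-component of the null geodesic equation (\ref{twochelp}):
\[
\ddot{\phi} + \frac{2}{\rho}\dot{\rho}\dot{\phi} = -\frac{2}{V+cT}\Bigl(\frac{\partial V}{\partial \rho}\dot{\rho} + \frac{\partial V}{\partial z}\dot{z}\Bigr)\dot{\phi}.
\]
Because $V$ is independent of $\phi$, every term carries an explicit factor of $\dot{\phi}$, so $\dot{\phi}=0$ forces $\ddot{\phi}=0$ and the constraint propagates. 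That is a two-line computation once the coordinates are chosen.

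Your symmetry-plus-uniqueness argument is the conceptual version of the same mechanism: you identify the plane as the fixed set of a reflection isometry under which $V$, and hence the full coupled system including the $T$-evolution via the null condition, is invariant, and then invoke Picard--Lindel\"of. What this buys is generality and robustness---the reasoning applies verbatim to any reflection-symmetric $V$ and to any totally geodesic fixed set of an isometry, without computing Christoffel symbols or choosing adapted coordinates. The paper's approach, by contrast, makes the mechanism completely explicit and sidesteps any need to discuss how the $T$-equation transforms, which is the point you correctly flagged as the main thing to verify in your approach.
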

\begin{proof}
Planes which pass through both centres are characterised by the condition $\phi = constant$. Then, the retraction projection of a null geodesic lies in such a plane if and only if $\dot{\phi} = 0$ at all points on the curve. Therefore, we can verify the proposition by showing that, at any point where $\dot{\phi} = 0$, we have $\ddot{\phi} = 0$. But, from equation (\ref{twochelp}), null geodesics of Kastor-Traschen satisfy
\begin{equation*}
\ddot{\phi} + \frac{2}{\rho} \dot{\rho} \dot{\phi} = - \frac{2}{V+cT} \left(\frac{\partial V}{\partial \rho} \dot{\rho} + \frac{\partial V}{\partial z} \dot{z} \right) \dot{\phi}
\end{equation*}
and the result follows.
\end{proof}
Now let us impose the additional condition $m_1 = m_2 = M$. Then, we discover another fixed plane of null geodesics.
\begin{twocenttwo}
If the retraction projection of a null geodesic has the property that its initial position and velocity lie in the plane passing orthogonally through the midpoint of the line segment joining the two centres, then the entire projected null geodesic lies in that plane.
\end{twocenttwo}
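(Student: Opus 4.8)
The plan is to mirror the proof of the preceding proposition: I will characterise the relevant plane by a single coordinate condition, write down the corresponding component of the geodesic equation, and show that the locus in phase space picked out by the initial data is invariant under the flow, so that uniqueness of solutions forces the entire curve to remain in the plane.

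First I would fix coordinates. With $m_1 = m_2 = M$ and the centres placed at $z = \pm w$ on the $z$-axis, the midpoint of the segment joining them is the origin and the plane passing orthogonally through it is simply $\{z = 0\}$. A curve lies in this plane precisely when $z \equiv 0$, so the hypothesis on the initial data translates into the conditions $z = 0$ and $\dot z = 0$ holding at the initial point.

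Next I would extract the $z$-component of the null geodesic equation (\ref{twochelp}). In cylindrical coordinates the flat connection components $\hat\Gamma^z_{jk}$ all vanish, so this component reduces to
\begin{equation*}
\ddot z = \frac{2}{V + cT}\left( \frac{\partial V}{\partial z} - \left(\frac{\partial V}{\partial x^k}\dot x^k\right)\dot z\right).
\end{equation*}
The key computation, and the only place where the equal-mass hypothesis enters, is to show that $\partial V/\partial z$ vanishes on the plane $z = 0$. This follows immediately from the reflection symmetry $z \to -z$ of the potential $V = M(\rho^2+(z-w)^2)^{-1/2} + M(\rho^2+(z+w)^2)^{-1/2}$: differentiating in $z$ and setting $z = 0$, the two contributions cancel identically for every value of $\rho$ and $\phi$.

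Finally I would conclude. At any point of a null geodesic where $z = 0$ and $\dot z = 0$, the first term on the right-hand side vanishes by the symmetry just established, while the second term vanishes because it carries a factor of $\dot z$; hence $\ddot z = 0$ there. Thus the locus $\{z = 0,\ \dot z = 0\}$ is preserved by the geodesic flow, and by uniqueness of solutions to the geodesic equations any null geodesic whose initial data lie in this locus satisfies $z \equiv 0$, as claimed. The point requiring care is precisely that, unlike the previous proposition where every term of the $\phi$-equation carried a factor of $\dot\phi$, here the vanishing of the driving term $\partial V/\partial z$ is not automatic and relies essentially on the equal-mass condition $m_1 = m_2$.
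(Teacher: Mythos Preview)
Your proof is correct and follows essentially the same route as the paper's: identify the plane as $\{z=0\}$, extract the $z$-component of (\ref{twochelp}), observe that $\partial V/\partial z$ vanishes on $z=0$ by the equal-mass symmetry, and conclude that $z=\dot z=0$ forces $\ddot z=0$. Your invocation of uniqueness for the geodesic flow and your remark that this is precisely where the hypothesis $m_1=m_2$ is used are slightly more explicit than the paper's version, but the argument is the same.
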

\begin{proof}
Clearly, the plane in question is given, in cylindrical polar coordinates, by $z = 0$ and any null geodesic which lies completely in this plane will satisfy the condition $\dot{z} = 0$ at all points on the curve. Therefore, if the initial conditions $z=0$, $\dot{z} = 0$ imply that $\ddot{z} = 0$ initially, then the proposition is proved. Again, by equation (\ref{twochelp}), null geodesics of $g$ satisfy
\begin{equation*}
\ddot{z} = \frac{2}{V+cT} \left(\frac{\partial V}{\partial z} - \left(\frac{\partial V}{\partial \rho} \dot{\rho} + \frac{\partial V}{\partial z} \dot{z} \right) \dot{z} \right)
\end{equation*}
and the result follows because
\begin{equation*}
\left. \frac{\partial V}{\partial z} \right|_{z=0} = \frac{Ma}{(\rho^2 + w^2)^{3/2}} - \frac{Ma}{(\rho^2 + w^2)^{3/2}} = 0.
\end{equation*}
\end{proof}

\subsection{Third Order System describing Null geodesics}
In section 5, we discovered that the third order system used to describe the retraction projection of null geodesics of the Kastor-Traschen metric was not uniquely defined and by considering a new formulation we could obtain an interpretation of the entire set of integral curves. Here, let us take yet another system of third order ODEs, the integral curves of which contain the projected null geodesics of the one-centre Kastor-Traschen metric by eliminating the $x^i$ term from the system (\ref{mainequation}), using (\ref{replacer}) and (\ref{niceprop}) to obtain
\begin{equation}
\dddot{x}^i = -|\ddot{\textbf{x}}|^2 \dot{x}^i + \left[ \frac{2mc}{|\textbf{x}|^3} \left( \frac{\textbf{x}.\ddot{\textbf{x}}}{|\ddot{\textbf{x}}|^2} \right) - 3(\textbf{x}.\dot{\textbf{x}}) \left( \frac{1}{|\textbf{x}|^2} + \frac{\textbf{x}.\ddot{\textbf{x}}}{2(|\textbf{x}|^2 - (\textbf{x}.\dot{\textbf{x}})^2)} \right) \right] \ddot{x}^i.
\label{newode}
\end{equation}
Then, we have the following result
\begin{charac}
Any integral curve of the system of ODEs (\ref{newode}), lies in a plane. Furthermore, if $c \neq 0$, such an integral curve will coincide with a projected null geodesic of the Kastor-Traschen metric $g$ if and only if this plane passes through the origin.
\end{charac}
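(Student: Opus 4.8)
The plan is to prove the two assertions in turn — planarity first, then the null-geodesic characterisation — the bridge between them being a single algebraic comparison of (\ref{newode}) with (\ref{mainequation}).

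First I would establish planarity directly from the structure of (\ref{newode}). The essential feature is that its right-hand side is a linear combination of $\dot{\textbf{x}}$ and $\ddot{\textbf{x}}$ alone; writing it as $\dddot{\textbf{x}} = -|\ddot{\textbf{x}}|^2 \dot{\textbf{x}} + \alpha \ddot{\textbf{x}}$ for the bracketed scalar $\alpha$, I introduce the binormal $\textbf{b} = \dot{\textbf{x}} \times \ddot{\textbf{x}}$ and compute $\dot{\textbf{b}} = \dot{\textbf{x}} \times \dddot{\textbf{x}} = \alpha\, \textbf{b}$. Thus $\textbf{b}$ preserves its direction, so the unit normal $\textbf{n} = \textbf{b}/|\textbf{b}|$ is constant along the curve; since $\dot{\textbf{x}}.\textbf{n} = 0$ identically, $\frac{d}{ds}(\textbf{x}.\textbf{n}) = 0$, so $\textbf{x}.\textbf{n}$ is constant and the curve lies in a plane (the degenerate case $\textbf{b} \equiv 0$ being a straight line, trivially planar). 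The constant value $\textbf{x}.\textbf{n} = \textbf{x}.(\dot{\textbf{x}} \times \ddot{\textbf{x}})/|\textbf{b}|$ is, up to scale, the signed distance of the plane from the origin, so the plane contains the origin if and only if $\textbf{x}.(\dot{\textbf{x}} \times \ddot{\textbf{x}}) = 0$.

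The core of the characterisation is then to compare (\ref{newode}) with (\ref{mainequation}). These systems share the $-|\ddot{\textbf{x}}|^2 \dot{\textbf{x}}$ and the $-3(\textbf{x}.\dot{\textbf{x}})(\cdots)\ddot{\textbf{x}}$ terms and differ only in the term proportional to $c$: equation (\ref{mainequation}) carries $\frac{2mc}{|\textbf{x}|^3}(\textbf{x} - (\textbf{x}.\dot{\textbf{x}})\dot{\textbf{x}})$ while (\ref{newode}) carries $\frac{2mc}{|\textbf{x}|^3}\frac{\textbf{x}.\ddot{\textbf{x}}}{|\ddot{\textbf{x}}|^2}\ddot{\textbf{x}}$. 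Writing $\textbf{x}_{\perp} = \textbf{x} - (\textbf{x}.\dot{\textbf{x}})\dot{\textbf{x}}$ for the component of $\textbf{x}$ orthogonal to the unit vector $\dot{\textbf{x}}$, and using $\dot{\textbf{x}}.\ddot{\textbf{x}} = 0$, the second expression is exactly the orthogonal projection of $\textbf{x}_{\perp}$ onto $\ddot{\textbf{x}}$ (since $\textbf{x}_{\perp}.\ddot{\textbf{x}} = \textbf{x}.\ddot{\textbf{x}}$). Hence the two terms coincide if and only if $\textbf{x}_{\perp}$ is parallel to $\ddot{\textbf{x}}$, i.e. if and only if $\textbf{x} \in \mathrm{span}(\dot{\textbf{x}}, \ddot{\textbf{x}})$, which is precisely $\textbf{x}.(\dot{\textbf{x}} \times \ddot{\textbf{x}}) = 0$ — the plane-through-origin condition from the first part. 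So along any integral curve of (\ref{newode}) lying in a plane through the origin, the system reduces identically to (\ref{mainequation}).

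Finally I would close the biconditional. For the forward direction, a projected null geodesic has $\lambda = 0$, i.e. $\ddot{\textbf{x}}.(\dot{\textbf{x}} \times \nabla V) = 0$ (as noted in Section 5); with $\nabla V \propto \textbf{x}$ in the one-centre case this reads $\textbf{x}.(\dot{\textbf{x}} \times \ddot{\textbf{x}}) = 0$, so the plane contains the origin. For the converse I take an integral curve of (\ref{newode}) whose plane passes through the origin, invoke the reduction above to conclude it solves (\ref{mainequation}), and then recover the lift: since $\ddot{\textbf{x}} \parallel \textbf{x}_{\perp}$, equation (\ref{replacer}) can be solved pointwise for $V + cT$, and hence for $T(s)$. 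The delicate point — and what I expect to be the main obstacle — is showing that this reconstructed $T(s)$ is consistent, i.e. that $(T(s),\textbf{x}(s))$ is a genuine null geodesic and not merely a solution of the reduced third-order system. The cleanest way to dispatch this is to avoid the explicit consistency check altogether: because $\ddot{\textbf{x}}.(\dot{\textbf{x}} \times \nabla V) = 0$ holds identically along the curve, (\ref{mainequation}) agrees with (\ref{bigsys}), so the curve is an integral curve of (\ref{bigsys}) with vanishing $\lambda$-coefficient, and Proposition 5.1 then identifies it as a projected null geodesic. The hypothesis $c \neq 0$ enters exactly here, since only then does the $2mc$ term carry the information needed to pin down $T$ and distinguish the null geodesics among the integral curves.
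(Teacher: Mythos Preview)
Your argument is correct, and the planarity step is essentially the paper's own: the paper sets up the Frenet--Serret frame and reads off $\tau=0$, whereas you compute $\dot{\textbf{b}}=\alpha\,\textbf{b}$ directly for the binormal $\textbf{b}=\dot{\textbf{x}}\times\ddot{\textbf{x}}$ --- the same content in slightly more elementary dress.

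The characterisation step, however, is organised differently. The paper does not compare (\ref{newode}) with (\ref{mainequation}) term by term; instead it runs a fresh initial-data matching argument in the spirit of the proof of Proposition~5.1: an integral curve of (\ref{newode}) lying in a plane through the origin is determined by six parameters (three for position, two for unit velocity, one for the acceleration in that plane orthogonal to the velocity), and by inspecting (\ref{replacer}) one sees that the free choice of $T$ lets a genuine null geodesic realise exactly the same initial data, so the two curves coincide. Your route is more algebraic and more economical: you isolate the single term by which (\ref{newode}) and (\ref{mainequation}) differ, observe that it vanishes precisely when $\textbf{x}\in\mathrm{span}(\dot{\textbf{x}},\ddot{\textbf{x}})$, and then hand the conclusion off to Proposition~5.1 rather than reproving it. What you gain is a transparent explanation of \emph{why} the plane-through-origin condition is the right one (it is literally where the two systems agree); what the paper's version gains is self-containment, since it does not rely on the reader having internalised that (\ref{bigsys}) with vanishing $\lambda$-coefficient collapses to (\ref{mainequation}). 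Both approaches use $c\neq 0$ in the same place --- to ensure that $T$ genuinely parametrises the missing acceleration degree of freedom.
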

\begin{proof}
We construct the Frenet-Serret frame for a given integral curve of (\ref{newode})
\begin{equation*}
\textbf{T} = \dot{\textbf{x}} \,\,\,\,,\,\,\,\, \textbf{N} = \frac{\ddot{\textbf{x}}}{|\ddot{\textbf{x}}|} \,\,\,\,,\,\,\,\, \textbf{B} = \textbf{T} \times \textbf{N}.
\end{equation*}
Then, the Frenet-Serret formulas give
\begin{eqnarray*}
\dot{\textbf{T}} &=& \kappa \textbf{N}, \\
\dot{\textbf{N}} &=& -\kappa \textbf{T} + \tau \textbf{B},
\end{eqnarray*}
where $\kappa$ and $\tau$ are the curvature and torsion of the curve, respectively. The first of these equations gives us $\kappa = |\ddot{\textbf{x}}|$. Then, if we rewrite our system (\ref{newode}) in this frame, we obtain
\begin{equation*}
\dot{\textbf{N}} = -\kappa \textbf{T}
\end{equation*}
and $\tau = 0$, necessarily. Hence, a given integral curve of this system must lie in a plane.
\\We have already established that the projected null geodesics in the one-centre case will lie in a plane passing through the origin. To prove the ``if'' part of the proposition, we note that the initial data of an integral curve of (\ref{newode}) which lies on a plane through the origin is specified by six parameters - three for initial position, two for initial velocity (since it is unit in the arclength parametrisation) and one for the acceleration (in the plane of the position and velocity vectors, perpendicular to the velocity). Then, this curve is a projected null geodesic of the Kastor-Traschen metric if there exists a null geodesic with the same initial data. But, we can see that this is the case by analysing equation (\ref{replacer}). Clearly, we can specify initial position and unit velocity vectors as we please. Then, the initial acceleration vector lies in the same plane perpendicular to the velocity and we can specify its magnitude by choosing the appropriate value of $T$.
\end{proof}
So, each integral curve of (\ref{newode}) lies in some plane but unlike the case of (\ref{bigsys}), these integral curves have no obvious interpretation in terms of the Kastor-Traschen metric unless this plane passes through the origin. However, as we observe, they do arise as the projections of null geodesics for the two-centre case - in particular, those outlined by Proposition 7.2.
\\To see this let us consider the original expression for the Kastor-Traschen metric (\ref{firsteq}) with $h$ as the flat metric in Cartesian coordinates
with the potential written as follows
\begin{equation*}
V = \frac{m_1}{|\textbf{x} - \textbf{w}|} + \frac{m_2}{|\textbf{x} + \textbf{w}|}
\end{equation*}
where $\textbf{w} = (w^i)$ is a fixed vector.  For ease of notation, let us define the vectors
\begin{equation*}
X_1^i = \frac{m_1}{|\textbf{x} - \textbf{w}|^3}(x^i - w^i) \,\,\,\,\,,\,\,\,\,\, X_2^i = \frac{m_2}{|\textbf{x} + \textbf{w}|^3}(x^i + w^i).
\end{equation*}
\\Then, null geodesics of the Kastor-Traschen metric are integral curves of the following system of third order ODEs
\begin{eqnarray}
\dddot{x}^i &=& -|\ddot{\textbf{x}}|^2 \dot{x}^i - \frac{3(\textbf{X}_1+\textbf{X}_2).\ddot{\textbf{x}}(\textbf{X}_1 +\textbf{X}_2).\dot{\textbf{x}}}{2\left(\textbf{X}_1 + \textbf{X}_2 \right).\left(\textbf{X}_1 + \textbf{X}_2 - \dot{\textbf{x}}\right)} \ddot{x}^i + 2c \big(X_1^i+X_2^i- ((\textbf{X}_1 + \textbf{X}_2).\dot{\textbf{x}})\dot{x}^i \big) \nonumber \\
&-& \frac{(\textbf{X}_1+\textbf{X}_2).\ddot{\textbf{x}}}{\left(\textbf{X}_1 + \textbf{X}_2 \right).\left(\textbf{X}_1 + \textbf{X}_2 - \dot{\textbf{x}}\right)} \Biggl( \frac{3|\textbf{x} - \textbf{w}|}{m_1}(\textbf{X}_1.\dot{\textbf{x}}) \big(X_1^i - (\textbf{X}_1.\dot{\textbf{x}})\dot{x}^i \big) + \frac{3|\textbf{x} + \textbf{w}|}{m_2}(\textbf{X}_2.\dot{\textbf{x}})\big(X_2^i - (\textbf{X}_2.\dot{\textbf{x}})\dot{x}^i \big) \Biggr). \nonumber
\end{eqnarray}
This system is difficult to analyze, in general, but now let us assume that $m_1 = m_2 = M$ and restrict attention to null geodesics which lie on the plane passing through the origin, orthogonal to the line between the two centres.
\\Then $|\textbf{x} + \textbf{w}| = |\textbf{x} - \textbf{w}|$ and since the acceleration and velocity vectors are perpendicular to $\textbf{w}$, we have $\textbf{w}.\dot{\textbf{x}} = \textbf{w}.\ddot{\textbf{x}} = 0$. By making some simplifications using the geodesic equation (\ref{propnullgeod}), as in the one-centre case, we can replace the system above by
\begin{equation}
\dddot{x}^i = -|\ddot{\textbf{x}}|^2 \dot{x}^i + \left[ \frac{4Mc}{|\textbf{x} - \textbf{w}|^3} \left(\frac{\textbf{x}.\ddot{\textbf{x}}}{|\ddot{\textbf{x}}|^2} \right) - 3(\textbf{x}.\dot{\textbf{x}}) \left( \frac{1}{|\textbf{x} - \textbf{w}|^2} + \frac{\textbf{x}.\ddot{\textbf{x}}}{2(|\textbf{x} - \textbf{w}|^2 - (\textbf{x}.\dot{\textbf{x}})^2)} \right) \right] \ddot{x}^i.
\label{definer}
\end{equation}
Now we notice that this is precisely the system of third order ODEs (\ref{newode}) for the single centre case with black hole mass $m = 2M$, where $|\textbf{x} - \textbf{w}|$ represents the distance from the centre.
\\Hence, we have proved the following proposition:
\begin{finalprop}
Every null geodesic of the two-centre Kastor-Traschen metric which lies completely in the plane passing orthogonally through the midpoint of the line segment joining the centres coincides with an integral curve of the system (\ref{newode}), with mass $m = 2M$, which lies in the plane a distance $w$ from the origin, where $2w$ is the distance between the centres.
\end{finalprop}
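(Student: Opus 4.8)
The plan is to treat Proposition 7.4 as the final assembly of the reduction already carried out in the surrounding text, the only genuinely new content being the geometric reinterpretation of the resulting curve. The hypothesis supplies planarity directly: the geodesic is assumed to lie entirely in the bisector plane, which (as in Proposition 7.2) is the plane through the origin orthogonal to $\textbf{w}$. Throughout the proof I would use the orthogonality relations $\textbf{w}.\dot{\textbf{x}} = \textbf{w}.\ddot{\textbf{x}} = \textbf{w}.\textbf{x} = 0$, valid at every point of such a curve, together with the equal-mass condition $m_1 = m_2 = M$. These two facts are exactly what drive the collapse of the full two-centre third-order system to equation (\ref{definer}): they force $|\textbf{x}+\textbf{w}| = |\textbf{x}-\textbf{w}|$ and $\textbf{X}_1 + \textbf{X}_2 = \frac{2M}{|\textbf{x}-\textbf{w}|^3}\textbf{x}$, so that $-\nabla V$ becomes parallel to $\textbf{x}$ precisely as in the one-centre problem, and every vector in the system lands in the span of $\{\textbf{x},\dot{\textbf{x}},\ddot{\textbf{x}}\}$.

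The central step is then to recognise (\ref{definer}) as a disguised copy of (\ref{newode}). I would introduce the shifted coordinate $\textbf{y} = \textbf{x} - \textbf{w}$, placing the effective single centre of mass $2M$ at the new origin. Since $\textbf{w}$ is constant we have $\dot{\textbf{y}} = \dot{\textbf{x}}$, $\ddot{\textbf{y}} = \ddot{\textbf{x}}$ and $\dddot{\textbf{y}} = \dddot{\textbf{x}}$, while the orthogonality relations give $\textbf{x}.\dot{\textbf{x}} = \textbf{y}.\dot{\textbf{y}}$, $\textbf{x}.\ddot{\textbf{x}} = \textbf{y}.\ddot{\textbf{y}}$ and $|\textbf{x}-\textbf{w}| = |\textbf{y}|$. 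Substituting these into (\ref{definer}) and setting $m = 2M$ (so that $2mc = 4Mc$) reproduces (\ref{newode}) term by term in the $\textbf{y}$ variable, which verifies that the geodesic is an integral curve of (\ref{newode}) with mass $2M$.

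The remaining claim is geometric. In the shifted coordinates the bisector plane $\{\textbf{x} : \textbf{w}.\textbf{x} = 0\}$ becomes $\{\textbf{y} : \textbf{w}.(\textbf{y}+\textbf{w}) = 0\} = \{\textbf{y} : \textbf{w}.\textbf{y} = -w^2\}$, whose perpendicular distance from the new origin is $w^2/|\textbf{w}| = w$ (using $|\textbf{w}| = w$, since the centres at $\pm\textbf{w}$ are separated by $2w$). Hence the integral curve lies in a plane a distance $w$ from the origin, as asserted. Since this plane does not pass through the origin whenever $w \neq 0$, Proposition 7.3 tells us the curve is precisely one of the integral curves of (\ref{newode}) that is \emph{not} itself a one-centre projected null geodesic; Proposition 7.4 therefore supplies the promised interpretation of these leftover curves as genuine two-centre light rays.

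The main obstacle is the reduction to (\ref{definer}) itself, namely showing that the awkward final bracket of the full two-centre system — the sum of the $\textbf{X}_1$ and $\textbf{X}_2$ contributions weighted by $3|\textbf{x}\mp\textbf{w}|/m_{1,2}$ — really collapses to a single multiple of $\ddot{\textbf{x}}$. I expect this to follow once one uses $\textbf{X}_1.\dot{\textbf{x}} = \textbf{X}_2.\dot{\textbf{x}}$ (a consequence of the symmetry and orthogonality) to combine the two terms, rewrites $x^i - (\textbf{x}.\dot{\textbf{x}})\dot{x}^i$ in terms of $\ddot{x}^i$ via the reduced geodesic equation (\ref{propnullgeod}), and finally eliminates the surviving factor of $V+cT$ using the one-centre form of (\ref{niceprop}). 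The bookkeeping is delicate and is where any stray sign or numerical factor would have to be controlled.
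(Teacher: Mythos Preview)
Your proposal is correct and follows essentially the same route as the paper: the paper carries out the reduction to (\ref{definer}) in the text preceding the proposition using exactly the symmetry relations $|\textbf{x}+\textbf{w}|=|\textbf{x}-\textbf{w}|$ and $\textbf{w}.\dot{\textbf{x}}=\textbf{w}.\ddot{\textbf{x}}=0$ you identify, and then simply observes that (\ref{definer}) is (\ref{newode}) with $m=2M$ and $|\textbf{x}-\textbf{w}|$ playing the role of distance to the centre. Your explicit shift $\textbf{y}=\textbf{x}-\textbf{w}$ and the computation that the bisector plane sits at distance $w$ from the new origin make rigorous what the paper leaves as a one-line remark, but the argument is the same.
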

\textbf{Remark:} This means that every integral curve of (\ref{newode}) can be realised as the retraction projection of a null geodesic in either the one-centre or two-centre Kastor-Traschen metric, making all solutions physically relevant.

\subsection{Perturbation Analysis}
Now let us look at the stability of null geodesics in the $z=0$ plane by applying a small perturbation $\epsilon \ll a$ in the $z$-direction about the origin so that $\dot{\epsilon}$ and $\ddot{\epsilon}$ are also small. Then substituting $x^3 = z + \epsilon$ into (\ref{definer}) gives us the differential equation
\begin{equation*}
\dddot{\epsilon} = - \left. |\ddot{\textbf{x}}|^2 \right|_{(z,\dot{z},\ddot{z})=0} \dot{\epsilon} + \left. \left[ \frac{4Mc}{|\textbf{x} - \textbf{w}|^3} \left(\frac{\textbf{x}.\ddot{\textbf{x}}}{|\ddot{\textbf{x}}|^2} \right) - 3(\textbf{x}.\dot{\textbf{x}}) \left( \frac{1}{|\textbf{x} - \textbf{w}|^2} + \frac{\textbf{x}.\ddot{\textbf{x}}}{2(|\textbf{x} - \textbf{w}|^2 - (\textbf{x}.\dot{\textbf{x}})^2)} \right) \right] \right|_{(z,\dot{z},\ddot{z})=0} \ddot{\epsilon}.
\end{equation*}
We can rewrite this as a coupled system of differential equations by choosing $\eta = \dot{\epsilon}$ and $\mu = \ddot{\epsilon}$ so that
\begin{equation*}
\frac{d}{ds}
\left(
\begin{array}{c}
\mu \\
\eta \end{array} \right)
= \left. \left(
\begin{array}{cc}
\frac{4Mc}{|\textbf{x} - \textbf{w}|^3} \left(\frac{\textbf{x}.\ddot{\textbf{x}}}{|\ddot{\textbf{x}}|^2} \right) - 3(\textbf{x}.\dot{\textbf{x}}) \left( \frac{1}{|\textbf{x} - \textbf{w}|^2} + \frac{\textbf{x}.\ddot{\textbf{x}}}{2(|\textbf{x} - \textbf{w}|^2 - (\textbf{x}.\dot{\textbf{x}})^2)} \right) & -|\ddot{\textbf{x}}|^2 \\
 1 & 0 \end{array} \right) \right|_{(z,\dot{z},\ddot{z})=0}
 \left( \begin{array}{c}
 \mu \\
 \eta \end{array} \right)
  \equiv B \left( \begin{array}{c}
 \mu \\
 \eta \end{array} \right).
\end{equation*}
The stability of the system under small perturbations is determined by the eigenvalues of $B$. Given that the determinant of $B$ is positive, ($=|\ddot{\textbf{x}}|^2$) we know that both eigenvalues have the same sign. If they are both positive then the system is unstable and if they are both negative then the system is stable. The mutual sign can be obtained from the trace of $B$ and thus, we get the following result,
\\\\\textbf{Proposition 7.5:} \emph{For the two-centre equal mass Kastor-Traschen metric, any geodesic which lies in the plane passing orthogonally through the midpoint of the line segment joining the two centres is stable under small perturbations normal to the plane at a point with given initial position, velocity and acceleration data if and only if}
\begin{equation*}
\frac{4Mc}{|\textbf{x} - \textbf{w}|^3} \left(\frac{\textbf{x}.\ddot{\textbf{x}}}{|\ddot{\textbf{x}}|^2} \right) - 3(\textbf{x}.\dot{\textbf{x}}) \left( \frac{1}{|\textbf{x} - \textbf{w}|^2} + \frac{\textbf{x}.\ddot{\textbf{x}}}{2(|\textbf{x} - \textbf{w}|^2 - (\textbf{x}.\dot{\textbf{x}})^2)} \right) < 0
\end{equation*}
\emph{at that point. Otherwise, it is unstable.}

\section{Unparametrised Projection of Null Geodesics in the One-Centre Kastor-Traschen Solution}
As final note, we will rewrite the system of third-order ODEs (\ref{newode}) in unparametrised form. By doing so, we get a purer notion of the set of projected null geodesics (free of parametrisation) and can make contact with the work in \cite{medvedev} where the author has explicitly derived differential invariants for systems of third order ODEs.
\\To start with, let us relabel our coordinates $x^i = (z,x^{\beta})$ with $\beta = 2,3$. If we let ' denote differentiation with respect to $z$ then we have
\begin{equation*}
\dot{x}^{\beta} = (x^{\beta})' \dot{z} \,\,\,\,,\,\,\,\, \ddot{x}^{\beta} = (x^{\beta})'' \dot{z}^2 + (x^\beta)' \ddot{z} \,\,\,\,,
\end{equation*}
\begin{equation*}
\dddot{x}^{\beta} = (x^{\beta})''' \dot{z}^3 + 3(x^{\beta})'' \dot{z} \ddot{z} + (x^{\beta})' \dddot{z}.
\end{equation*}
From the system (\ref{newode}), we can eliminate the $\dddot{z}$ term to obtain a pair of third order expressions
\begin{equation}
(x^{\beta})''' \dot{z}^3 + 3(x^{\beta})'' \dot{z} \ddot{z} = \left[ \frac{2mc}{|\textbf{x}|^3} \left(\frac{\textbf{x}.\ddot{\textbf{x}}}{|\ddot{\textbf{x}}|^2}\right) - 3(\textbf{x}.\dot{\textbf{x}}) \left( \frac{1}{|\textbf{x}|^2} + \frac{\textbf{x}.\ddot{\textbf{x}}}{2(|\textbf{x}|^2 - (\textbf{x}.\dot{\textbf{x}})^2)} \right) \right] (x^{\beta})'' \dot{z}^2.
\label{prelim}
\end{equation}
This system can be simplified even further to eliminate the factors of $\dot{z}$ and $\ddot{z}$. First, let us use the following convention
\begin{equation*}
\textbf{x} = \left(
\begin{array}{c} z \\ x^\beta
\end{array}
\right) \,\,\,\,,\,\,\,\,
\textbf{u} = \left(
\begin{array}{c} 1 \\ (x^{\beta})'
\end{array} \right) \,\,\,\,,\,\,\,\,
\textbf{a} = \left(
\begin{array}{c} 0 \\ (x^{\beta})''
\end{array} \right)
\end{equation*}
Then, using the arc-length parametrization condition $h_{jk} \dot{x}^j \dot{x}^k = 1$, one can show that
\begin{equation}
 |\textbf{u}|^2 \dot{z}^2 = 1 \,\,\,,\,\,\, \ddot{z} = - \frac{\textbf{u.a}}{|\textbf{u}|^2} \dot{z}^2,
 \label{newconds}
\end{equation}
Using this, it can also be shown that
\begin{equation*}
|\ddot{\textbf{x}}|^2 = \left( |\textbf{a}|^2 - \frac{(\textbf{u.a})^2}{|\textbf{u}|^2} \right) \dot{z}^4 \,\,\,\,,\,\,\,\, \textbf{x}.\ddot{\textbf{x}} = \left(\textbf{x.a} - \frac{(\textbf{x.u})(\textbf{u.a})}{|\textbf{u}|^2} \right) \dot{z}^2.
\end{equation*}
Then, we can use the expressions in (\ref{newconds}) to eliminate $\ddot{z}$ terms and subsequently powers of $\dot{z}$ in (\ref{prelim}) (This can be done by multiplying each term by the appropriate power of $u^2 \dot{z}^2$ to give every term the same ``weight'' in terms of powers of $\dot{z}$). The resulting expression will give us a pair of third order ODEs whose integral curves are the unparametrized curves of the system (\ref{newode}):
\begin{equation}
(x^{\beta})''' = \left[ 3 \frac{\textbf{u.a}}{|\textbf{u}|^2} + \frac{2 mc |\textbf{u}|^3}{|\textbf{x}|^3} \left( \frac{|\textbf{u}|^2 (\textbf{x.a}) - (\textbf{x.u})(\textbf{u.a})}{|\textbf{u}|^2 |\textbf{a}|^2 - (\textbf{u.a})^2} \right) - 3(\textbf{x.u}) \left(\frac{1}{|\textbf{x}|^2} + \frac{|\textbf{u}|^2 (\textbf{x.a}) - (\textbf{x.u})(\textbf{u.a})}{2(|\textbf{u}|^4|\textbf{x}|^2 - |\textbf{u}|^2(\textbf{x.u})^2)} \right) \right] (x^{\beta})''.
\label{ultimu}
\end{equation}
In the limit $c \rightarrow 0$, the second and third terms on the right-hand side of (\ref{ultimu}) vanish coinciding with the unparametrised equation for conformal circles. I have calculated some of the Medvedev invariants \cite{medvedev} can for the system (\ref{ultimu}) and for conformal circles in Mathematica and the formulae can be obtained on request (but are too long to be included with this work).

\section{Conclusions}
As was demonstrated in \cite{wernerwar}, physical phenomena observed in spacetimes which admit a timelike vector field with a certain mathematical property can often be better understood by looking at the projection of null geodesics to the space of orbits of this vector field. This structure has been well documented in the static and stationary cases and in this paper, we discussed a specific example of a metric admitting a timelike conformal retraction which is also a solution of the Einstein-Maxwell equations and so, an important GR example.
\\The third order system (\ref{thirdord}) arises naturally to describe the retraction projection of null geodesics of the Kastor-Traschen metric. We have, however, demonstrated that there is a freedom in the definition of this ODE system and that a more useful analysis is obtained by considering instead the system (\ref{bigsys}) where the totality of integral curves can be interpreted as the projection of null curves in the Kastor-Traschen metric describing a magnetic flow in the background magnetic field. This endows a physical relevance to this system and it would be interesting to probe its relevance further. Using this formulation, we've characterised those integral curves of (\ref{bigsys}) which coincide with the retraction projection of null geodesics.
\\For the one-centre K-T solution, the projected null geodesics are identified as those which lie on a plane through the origin. However, in this case, there is another third order system (\ref{newode}) whose integral curves arise as a distinguished subset of the projected null geodesics of the two-centre metric for some value of the distance between the centres, $2w$, with masses $m_1 = m_2 = M$. This analysis of null geodesics appears to be a step further than has been seen thus far but extracting more analytic solutions is far from easy.
\\There is a consistent physical intuition here if we consider what happens when $w \rightarrow 0$. We should expect to reproduce the retraction projection of some subset of the null geodesics for the one-centre Kastor-Traschen metric with black hole mass $m = 2M$ and this is precisely what happens. In fact, we obtain all of the projected null geodesics because of the inherent spherical symmetry that accompanies this limit.
\\One final point that we should stress here is that some of the physical properties of null geodesics obvious in the projection along one type of vector field can be obscured in the projection along another. A clear example of this point can be seen in the one-centre Kastor-Traschen metric which we know, via the transformation to extremal RNdS coordinates, admits both a timelike conformal retraction and a timelike static Killing vector field. Light rays project down to unparametrised geodesics of the optical metric associated to the static Killing vector field and it can be shown that for different values of the cosmological constant $\Lambda$, the resulting optical metrics are projectively equivalent. One consequence of this is the fact that the differential equations governing light rays in these spacetimes are also independent of $\Lambda$. This phenomenon is not evident, however, when we consider the retraction projection of light rays in Kastor-Traschen coordinates. There is a clear $c$ (or $\Lambda$) dependence in the system of ODEs (\ref{newode}) and this even carries over to the equations governing the unparamterised curves (\ref{ultimu}).
\\Nonetheless, this is an interesting and physically relevant area of study and this paper encourages several open questions. It would be interesting to find analytic solutions to (\ref{bigsys}) in a more general case and to say something more concrete about the Kastor-Traschen metric with arbitrary $V$. Furthermore, we may ask about the properties of the null geodesic structure of an arbitrary metric which admits a timelike conformal retraction.

\newpage
\bibliographystyle{unsrt}
\bibliography{KTbib}

\end{document}